\newtheorem{proposition}{Proposition}
\newtheorem{lemma}{Lemma}[section]
\newtheorem{theorem}{Theorem}
\newtheorem{corollary}{Corollary}[section]
\newtheorem{definition}{Definition}
\theoremstyle{remark}
\newtheorem{remark}{Remark}
\newcommand{\ie}{\text{i.e.}\xspace}
\newcommand{\eg}{\text{e.g.}\xspace}
\newcommand{\cf}{\text{cf.}\xspace}
\newcommand{\tuple}[1]{\ensuremath{\left(#1\right)}}
\newcommand{\tp}{\tuple}
\newcommand{\eps}{\ensuremath{\varepsilon}}
\newcommand{\floor}[1]{\ensuremath{\left\lfloor{#1}\right\rfloor}\xspace}
\def\*#1{\mathbf{#1}}
\def\+#1{\mathcal{#1}}
\newcommand{\superscript}[1]{\ensuremath{^{\mbox{\tiny{\textit{#1}}}}}\xspace}
\def \th {\superscript{th}}     
\newcommand{\PRs}[2]{\ensuremath{\Pr_{#1}\left[#2\right]}}
\newcommand{\Exs}[2]{\ensuremath{\underset{#1}{\mathbb{E}}\left[#2\right]}}
\newcommand{\poly}{\ensuremath{\mathrm{poly}}\xspace}
\newcommand{\tv}[1]{\ensuremath{d_{\mathrm{TV}}\tp{#1}}}
\newcommand{\mle}{\mathrm{MLE}}
\newcommand{\tx}{\widetilde{x}}
\newcommand{\idc}[1]{\mathbf{1}\Set{#1}}
\def\final{1}  
\def\iflong{\iffalse}
\newcommand{\mnote}[1]{{\color{purple} (Minshen's note: #1)}}
\newcommand{\enote}[1]{{\color{orange} (Elena's note: #1)}}
\newcommand{\xnote}[1]{{\color{magenta} (Xin's note: #1)}}
\newcommand{\msnote}[1]{{\color{blue} (Madhu's note: #1)}}
\newcommand{\knote}[1]{{\color{cyan} (Kuan's note: #1)}}
\newcommand{\todo}[1]{{\color{red} (TODO: #1)}}
\newcommand{\mnote}[1]{}
\newcommand{\enote}[1]{}
\newcommand{\xnote}[1]{}
\newcommand{\msnote}[1]{}
\newcommand{\knote}[1]{}
\newcommand{\todo}[1]{}
\title{On $k$-Mer-Based and Maximum Likelihood Estimation Algorithms for Trace Reconstruction}
\author{Kuan Cheng \thanks{Peking University, Haidian, Beijing, China.  Email: \texttt{ckkcdh@pku.edu.cn}.}
\and Elena Grigorescu \thanks{Purdue University, West Lafayette, IN, USA. Supported in part by NSF CCF-1910411, and NSF CCF-2228814. Email: \texttt{elena-g@purdue.edu}. }
\and Xin Li \thanks{Johns Hopkins University, Baltimore, MD, USA. Supported in part by NSF CAREER Award CCF-1845349 and NSF Award CCF-2127575. Email: \texttt{lixints@cs.jhu.edu}.}
\and Madhu Sudan\thanks{School of Engineering and Applied Sciences, Harvard University, Cambridge, Massachusetts, USA. Supported in part by a Simons Investigator Award and NSF Award CCF 2152413. Email: \texttt{madhu@cs.harvard.edu}.}
\and Minshen Zhu \thanks{Most of the work was done as a PhD student at Purdue University. Supported in part by NSF CCF-1910411, and NSF CCF-2228814. Email: \texttt{minshen.zh@gmail.com}. }
}
\date{}
\begin{document}

\maketitle

\begin{abstract}
    The goal of the trace reconstruction problem is to recover a string $\*x\in\{0,1\}^n$ given many independent {\em traces} of $\*x$, where a trace is a subsequence obtained from deleting bits of $\*x$ independently with some given probability $p\in [0,1).$ A recent result of Chase (STOC 2021) shows how $\*x$ can be determined (in exponential time) from $\exp({O}(n^{1/5})\log^5 n)$ traces. This is the state-of-the-art result on the sample complexity of trace reconstruction. 

   In this paper we consider two kinds of algorithms for the trace reconstruction problem. 
    
    We first observe that the bound of Chase, which is based on statistics of arbitrary length-$k$ subsequences, can also be  obtained by considering the ``$k$-mer statistics'', \ie, statistics regarding occurrences of {\em contiguous} $k$-bit strings (a.k.a, {\em $k$-mers}) in the initial string $\*x$, for $k = 2n^{1/5}$.  Mazooji and Shomorony  (arXiv.2210.10917) show that such statistics (called $k$-mer density map) can be estimated within $\eps$ accuracy from  $\poly(n, 2^k, 1/\eps)$ traces. 
    We call an algorithm to be {\em $k$-mer-based} if it reconstructs $\*x$ given estimates of the  $k$-mer density map.
    Such algorithms essentially capture all the analyses in the worst-case and smoothed-complexity models of the trace reconstruction problem we know of so far.
        
    Our first, and technically more involved, result shows that any $k$-mer-based algorithm for trace reconstruction must use $\exp(\Omega(n^{1/5} \sqrt{\log n}))$ traces, under the assumption that the estimator requires $\poly(2^k, 1/\eps)$ traces, thus establishing the optimality of this number of traces. The analysis of this result also shows that the analysis technique used by Chase (STOC 2021) is essentially tight, and hence new techniques are needed in order to improve the worst-case upper bound. 
    
    This result is shown by considering an appropriate class of real polynomials, that have been previously studied in the context of trace estimation (De, O'Donnell, Servedio. Annals of Probability 2019; Nazarov, Peres. STOC  2017), and proving that two of these polynomials are very close to each other on an arc in the complex plane. Our proof of the proximity of such polynomials uses new technical ingredients that allow us to focus on just a few coefficients of these polynomials. 

    Our second, simple, result considers the performance of the Maximum Likelihood Estimator (MLE), which specifically picks the source string that has the maximum likelihood to generate the samples (traces). We show that 
    the MLE algorithm uses a nearly optimal number of traces, \ie, up to a factor of $n$ in the number of samples needed for an optimal algorithm, and show that this factor of $n$ loss may be necessary under general ``model estimation'' settings.

\end{abstract}

\newpage

\section{Introduction}
The trace reconstruction problem is an infamous question introduced by Batu, Kannan, Khanna and McGregor \cite{BatuKKM04} in the context of computational biology. It asks to design algorithms that recover a string $\*x\in \{0,1\}^n$ given access to {\em traces} $\tilde{\*x}$ of $\*x$, obtained by deleting each bit independently with some given probability $p\in [0,1).$ The best current upper and lower bounds are exponentially apart, namely $\exp(\widetilde O(n^{1/5}))$ traces are sufficient for reconstruction \cite{chase2021separating} (improving upon the  $\exp(O(n^{1/3}))$ of \cite{nazarov2017trace, de2019optimal}) and ${\widetilde{\Omega}}(n^{3/2})$ \cite{HoldenL18, chase2021new} are necessary.

The problem has been recently studied in several variants so far \cite{BatuKKM04, Kannan005, ViswanathanS08, HolensteinMPW08, McGregorPV14, PeresZ17, nazarov2017trace, de2019optimal, GabrysM17,  HoldenPP18, HoldenL18, HartungHP18, GabrysM19, cheraghchi2020coded, KrishnamurthyM019, brakensiek2020coded, chen2020polynomialtime, chase2021separating, chase2021approximate, narayanan2021circular, sima2021trace, grigorescu2022limitations, rubinstein2023average} and it continues to elicit interest due to its deceptively simple formulation, as well as its motivating applications to DNA computing \cite{Olgica17}.

In this paper, we focus on the worst-case formulation of the problem, which is equivalent from an information-theoretic point of view to the {\em distinguishing} variant. In this variant, the goal is to distinguish whether the received  traces come from string $\*x\in \{0,1\}^n$ or from  $\*y\in \{0,1\}^n$, for some known $\*x\ne \*y.$

\paragraph{Algorithms based on $k$-bit statistics }

A very natural kind of algorithms \cite{HolensteinMPW08, nazarov2017trace,de2019optimal} operates using the mean of the received traces at each location $i\in [n]$ (one may assume that traces of smaller length than $n$ are padded with $0$'s at the end). Indeed, let $\+D_{\*x}$ be the distribution of the traces induced by the deletion channel on input $\*x$.
A mean/$1$-bit-statistics -based algorithm first estimates from the received traces the mean vector $\*E(\*x)=\tp{E_0(\*x), \cdots, E_{n-1}(\*x)} \in [0,1]^n$, where the $j$-th coordinate is defined as 
\begin{align*}
	E_j(\*x) = \Exs{\tilde{\*x} \sim \+D_{\*x}}{\tx_j}.
\end{align*}
It then may perform further post-processing without further inspection of the traces. 

Solving the distinguishing problem then reduces by standard arguments to understanding the $\ell_1$-norm between the mean traces of $\*x$ and $\*y$, namely the number $T$ of traces satisfies 

\[ \Omega\tp{1/\norm{\*E(\*x)-\*E(\*y)}_{\ell_1}} =T=O\tp{1/\norm{\*E(\*x)-\*E(\*y)}_{\ell_1}^2}.\]

\cite{nazarov2017trace, de2019optimal}  related the $\ell_1$-norm above with the supremum of a certain real univariate polynomial over the complex plane. Using techniques from complex analysis they proved that mean-based algorithms using $\exp(O(n^{1/3}))$ traces and outputting the string $\*s\in\{\*x, \*y\}$ whose $\*E(\*s)$  is closer in $\ell_1$-distance to the estimate is a successful reconstruction algorithm. Furthermore, any mean-based algorithm needs $\exp(\Omega(n^{1/3}))$ traces to succeed with high probability \cite{nazarov2017trace, de2019optimal}.

A general class of algorithms may operate by using $k$-bit statistics \cite{chase2021separating}, for $k\geq 1$.  Specifically, for $w\in \{0,1\}^k$, the algorithm estimates from the given traces, for  tuples $0\le i_0 < i_1 < \dots < i_{k-1}\le n-1$, the quantity 
\begin{align*}
 \Exs{\tilde{\*x} \sim \+D_{\*x}}{\prod_{{0\leq j < k}} \idc{\tx_{i_j}=w_j}}.
\end{align*}
After the estimation step, whose accuracy can be argued via standard Chernoff bounds, the algorithm does not need the traces anymore and may perform further post-processing in order to output the correct string.
The result of Chase follows from showing that for $k=2n^{1/5}$ there is a string $w\in \{0,1\}^{k}$ for which the $\ell_1$-distance between the corresponding $k$-bit statistics between $\*x$ and $\*y$ is large.

\paragraph{Algorithms based on $k$-mer statistics} Another variant proposed by Mazooji and Shomorony \cite{mazooji2022substring} considers algorithms which operate using estimates of statistics regarding occurrences of {\em contiguous} $k$-bit strings (a.k.a, {\em $k$-mers}) in the {\em initial} string $\*x$. We denote by $\idc{\*x[j\colon j+k-1] = w}$ the indicator bit of whether $w\in\{0,1\}^k$ occurs as a subword in $\*x$ from position $j$.

In particular, \cite{mazooji2022substring}  made the following definition which is central to our paper.
\begin{definition}[\cite{mazooji2022substring}]
Given a string $\*x \in \set{0,1}^n$ and a $k$-mer $w \in \set{0,1}^k$, for $i=0,1,\dots,n-1$ denote
\begin{align*}
    K_{w,\*x}[i] \coloneqq \sum_{j=0}^{n-k} \binom{j}{i}p^{j-i}(1-p)^{i} \cdot \idc{\*x[j\colon j+k-1] = w}.
\end{align*}
The vector $K_{\*x}\coloneqq \tp{K_{w,\*x}[i]\colon w \in \set{0,1}^k, i \in [n]}$ is called the \emph{$k$-mer density map} of $\*x$.
\end{definition}

Note that the mean vector $\*E(\*x)$ is, up to a factor of $1-p$, equivalent to the $1$-mer density map.  Indeed, for $k=1$ and $w=1$ we have 
\begin{align*}
E_i(\*x) &= \Exs{\tilde{\*x} \sim \+D_{\*x}}{ \tx_i} = \sum_{j=0}^{n-1}\Pr\left[\tx_i\textup{ comes from }x_j\right] \cdot x_j \\
&= \sum_{j=0}^{n-1}\binom{j}{i}p^{j-i}(1-p)^{i+1} \cdot x_j = (1-p)\cdot\sum_{j=0}^{n-1}\binom{j}{i}p^{j-i}(1-p)^{i} \cdot \idc{\*x[j:j]=1} = (1-p)K_{1,\*x}[i].
\end{align*}

As noted in \cite{mazooji2022substring}, the techniques of \cite{chen2020polynomialtime} in the smoothed complexity model of trace reconstruction can also be viewed as based on $k$-mer density maps. Indeed, for a fixed $w\in \{0,1\}^k$, the number of its occurrences as a subword in $\*x$ is $\sum_{j=0}^{n-1}\idc{\*x[j\colon j+k-1]=w} = \sum_{i=0}^{n-1}K_{w,\*x}[i]$. They show that for $k=O(\log n)$, the subword vector (indexed by $w\in \{0,1\}^k$) uniquely determines the source string, with high probability \cite[Lemma 1.1]{chen2020polynomialtime}.

The main result of \cite{mazooji2022substring} is that given access to $T=\eps^{-2}\cdot 2^{O(k)}\poly(n)$ traces of $\*x$, one can recover an estimation $\hat{K}_{\*x}$ of the $k$-mer density map $K_{\*x}$ which is entry-wise $\eps$-accurate, i.e., $\norm{\hat{K}_{\*x}-K_{\*x}}_{\ell_{\infty}} \le \eps$. We remark that by replacing $\eps$ with $\eps/(2^k n)$, one gets an estimate which is $\eps$-accurate in $\ell_1$-norm, while using asymptotically the same number of traces.

We make the following definition generalizing mean-based algorithms (\cite{de2019optimal, nazarov2017trace}).

\begin{definition}(Algorithms based on $k$-mer statistics) \label{def:k-mer-alg}
A trace reconstruction algorithm based on $k$-mer statistics works in two steps as follows:
\begin{enumerate}
\item Once the unknown source string $\*x\in \{0,1\}^n$ is picked, it chooses an accuracy parameter $\eps \in (0,1]$. It then receives an $\epsilon$-accurate estimate (in $\ell_1$-norm) of the $k$-mer density map $K_{\*x}$ based on the traces. From here on the algorithm has no more access to the traces themselves.   We define the cost of this part to be $2^k/\eps$. 

\item The algorithm may perform further post-processing and finish by outputting the source string.
\end{enumerate}
   \end{definition}
Since there is an algorithm to $\eps$-estimate the $k$-mer density map with $\eps^{-2}\cdot 2^{O(k)}\poly(n)$ many traces \cite{mazooji2022substring}, it follows that an algorithm defined as in Definition \ref{def:k-mer-alg}  with cost $T$ can be turned into a trace reconstruction algorithm with $\poly(T)$ samples.

We note that the $k$-mer density map estimators of \cite{mazooji2022substring} only use $k$-bit statistics of the traces, in fact statistics about contiguous $k$ bits in the traces, and hence $k$-mer-based algorithms are a subclass of 
algorithms based on $k$-bit statistics.

In this work,  we first observe that the upper bounds of Chase \cite{chase2021separating} can be in fact obtained via $k$-mer-based algorithms (see the formal statement in \thmref{thm:chaseub}), and hence by only using statistics of contiguous subwords of the traces. Our main result says that $k$-mer-based algorithms require $\exp(\Omega(n^{1/5})\sqrt{n})$ many traces (see \thmref{thm:main}). In addition, the analysis of this result implies that the proof technique in Chase \cite{chase2021separating} cannot lead to a better analysis of the sample complexity (up to $\log^{4.5} n$ factors in the exponent), and hence new techniques are needed to significantly improve the current upper bound.

\paragraph{The Maximum Likelihood Estimator} In model estimation settings, a common tool for picking a ``model'' that best explains the observed data is the Maximum Likelihood Estimator (MLE). In the setting of trace reconstruction, it is natural to ask: What is the most likely trace distribution $\+D_{\*x}$ (and hence $\*x$) to have produced the given sample/trace(s)? We formalize MLE next. 
\begin{definition}[Maximum Likelihood Estimation] \label{def:mle}
	Let $\+D=\set{D_1, D_2,\dots, D_m}$ be a finite set of probability distributions over a common domain $\Omega$. Given a sample $x \in \Omega$, the output of the Maximum Likelihood Estimation is (ties are broken arbitrarily)
	\begin{align*}
		\mle(x; \+D) \coloneqq \arg \max_{i \in [m]} D_i(x).
	\end{align*}
 For independently and identically distributed samples $x_1, x_2, \ldots, x_k \in \Omega$ the output of the Maximum Likelihood Estimation is (ties are broken arbitrarily) is 
	\begin{align*}
		\mle(x_1, x_2, \ldots x_k; \+D) \coloneqq \arg \max_{i \in [m]} \prod_{j \in [k]} D_i(x_j).
	\end{align*}
\end{definition}

We present a simple proof that this algorithm (which takes exponential time, as it searches through all $\*x\in \{0,1\}^n$) is in fact optimal in the number of {\em traces} used, up to an $O(n)$ factor blowup.

We also observe that in the average-case setting, where the source string is a uniformly random string from $\{0,1\}^n$, $\mle$ is indeed optimal -- without the $O(n)$ factor blowup (see Remark \ref{remark:avg-mle}.)

\subsection{Our Contributions}

\paragraph{The power of $k$-mer-based algorithms} Our first result shows that algorithms based on $k$-mer statistics can reconstruct a source string using $\exp(\widetilde{O}(n^{1/5}))$ many traces. This follows from the following theorem.

\begin{theorem}[Implied by \cite{chase2021separating}] \label{thm:chaseub}
Let $\*x, \*y \in \set{0,1}^n$ be two arbitrary distinct strings, and let $K_{\*x}, K_{\*y}$ be their $k$-mer density maps, respectively. Assuming $k=2n^{1/5}$, it holds that
\begin{align*}
	\norm{K_{\*x}-K_{\*y}}_{\ell_1} \ge \exp\tp{-O(n^{1/5}\log^5 n)}.
\end{align*}
\end{theorem}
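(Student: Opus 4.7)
The plan is to pass from the $k$-mer density map to a univariate generating polynomial and then invoke Chase's complex-analytic lower bound on a suitable arc. For each $w \in \set{0,1}^k$, the binomial identity $\sum_i \binom{j}{i} p^{j-i}(1-p)^i z^i = (p+(1-p)z)^j$ yields
\begin{align*}
\sum_{i=0}^{n-1} K_{w,\*x}[i]\, z^i \;=\; P_{w,\*x}\tp{p+(1-p)z},
\end{align*}
where $P_{w,\*x}(\zeta) := \sum_{j=0}^{n-k} \idc{\*x[j\colon j+k-1]=w}\, \zeta^j$ is the $\set{0,1}$-coefficient polynomial that encodes the positions at which $w$ appears as a contiguous $k$-mer in $\*x$. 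Thus $K_{w,\*x}$ is exactly the coefficient sequence of $P_{w,\*x}(p+(1-p)z)$, and $K_{w,\*x}-K_{w,\*y}$ is (after the affine substitution $\zeta \mapsto p+(1-p)z$) the coefficient sequence of a $\set{-1,0,1}$-coefficient polynomial in $z$.

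Applying the elementary bound $\sum_i |a_i| \ge \sup_{|z| \le 1} \bigl|\sum_i a_i z^i\bigr|$ to this difference and changing variables back to $\zeta$, one obtains $\norm{K_{w,\*x}-K_{w,\*y}}_{\ell_1} \ge \sup_{\zeta \in D}\bigl|P_{w,\*x}(\zeta)-P_{w,\*y}(\zeta)\bigr|$, where $D$ is the closed disk of radius $1-p$ centered at $p$ (the image of the closed unit disk under $z \mapsto p+(1-p)z$). Summing over $w$ and using $\sum_w \ge \max_w$ gives
\begin{align*}
\norm{K_{\*x}-K_{\*y}}_{\ell_1} \;\ge\; \max_{w \in \set{0,1}^k}\, \sup_{\zeta \in D}\, \bigl|P_{w,\*x}(\zeta)-P_{w,\*y}(\zeta)\bigr|,
\end{align*}
so it suffices to exhibit some $w$ and $\zeta \in D$ for which the right-hand side is at least $\exp\bigl(-O(n^{1/5}\log^5 n)\bigr)$.

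The final and main step is to produce this pointwise lower bound by invoking Chase's argument \cite{chase2021separating}. His $\exp(\widetilde O(n^{1/5}))$ upper bound, although phrased in terms of $k$-bit statistics of the trace at $k$ arbitrary positions, is at its core a quantitative lower bound on the modulus of a $\set{-1,0,1}$-coefficient polynomial built from $\*x$, $\*y$, and a length-$k$ pattern, evaluated on an appropriate arc of $\partial D$ close to $1$. The main technical obstacle is to verify that Chase's pattern, which a priori selects $k$ non-contiguous trace positions together with pattern bits, can be replaced without loss by a contiguous $k$-mer of the source string, with the $2^k$ choices of $w$ in the $k$-mer density map supplying the combinatorial freedom that Chase exploits across arbitrary trace positions. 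Once this translation is verified, Chase's arc-length tuning and complex-analytic estimates transport essentially verbatim and yield the claimed $\exp(-O(n^{1/5}\log^5 n))$ bound.
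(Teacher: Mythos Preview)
Your overall plan --- pass to the generating polynomial, bound the $\ell_1$-norm below by a pointwise evaluation, and invoke Chase's complex-analytic lemma --- is exactly the paper's route. But two concrete steps are missing or mis-stated.

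First, the arc on which Chase's bound holds is \emph{not} contained in your disk $D$. Chase's Proposition~6.3 (cited here as \cref{lem:chase-prop}) supplies $z_0=e^{i\theta}$ with $\abs{\theta}\le n^{-2/5}$ on the \emph{unit} circle, and for $\theta\neq 0$ one has $\abs{z_0-p}>1-p$, so $z_0\notin D$ (and $z_0\notin\partial D$). Hence your inequality $\norm{K_{w,\*x}-K_{w,\*y}}_{\ell_1}\ge\sup_{\zeta\in D}\abs{P_{w,\*x}(\zeta)-P_{w,\*y}(\zeta)}$ cannot be applied at $z_0$. The paper fixes this by computing that the corresponding point $z=(z_0-p)/(1-p)$ satisfies $\abs{z}^n\le\exp\bigl(O_p(n^{1/5})\bigr)$, so that the generic bound $\sum_\ell\abs{a_\ell}\ge\abs{z}^{-n}\bigl|\sum_\ell a_\ell z^\ell\bigr|$ loses only an $\exp(O(n^{1/5}))$ factor, which is absorbed into the target bound.

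Second, \cref{lem:chase-prop} has the premise that $\*x$ and $\*y$ agree on their first $2n^{1/5}-1$ bits; you do not address what happens when they differ there. The paper treats this as a separate case: take $w=\*x[0\colon 2n^{1/5}-1]$, so that the indicator polynomial has constant term $1$, and use a result of Borwein--Erd\'elyi--K\'os (\cref{lem:BEK99}) to get a \emph{constant} lower bound on $\sup_{t\in[2p-1,1]}\abs{Q(t)}$ (the interval $[2p-1,1]\subset D$, so here your $\sup_{\zeta\in D}$ argument does work). Without this second case your proof is incomplete.

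Finally, your ``main technical obstacle'' --- translating Chase's arbitrary-position $k$-bit statistic into a contiguous $k$-mer --- is not actually an obstacle: Chase's Proposition~6.3 is already stated for contiguous substrings $\*x[j\colon j+k-1]$, and the paper simply quotes it as \cref{lem:chase-prop}. No further reduction is needed.
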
 

Based on \thmref{thm:chaseub}, the algorithm estimates $\hat{K}$ within an accuracy of $\eps=\exp(-O(n^{1/5}\log^5 n))$ and outputs the $\*x$ that minimizes $\norm{{\hat K}-K_{\*x}}_{\ell_1}.$ The cost of this $k$-mer-based algorithm is $\exp(O(n^{1/5}\log^5n))$.

Our main result regarding $k$-mer-based algorithms is the following theorem which shows the tightness of the bound in \thmref{thm:chaseub}.

\begin{theorem} \label{thm:main}
Fix any $k\le n^{1/5}$. Suppose $K_{\*x}$ stands for the $k$-mer density map of $\*x$. There exist distinct strings $\*x, \*y \in \set{0,1}^n$ such that
\begin{align*}
	\norm{K_{\*x}-K_{\*y}}_{\ell_1} \le \exp\tp{-\Omega(n^{1/5}\sqrt{\log n})}.
\end{align*}
\end{theorem}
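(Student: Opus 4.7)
The plan is to reduce the bound on $\|K_{\*x}-K_{\*y}\|_{\ell_1}$ to an estimate for a family of complex polynomials on the circle $\Gamma := \set{p+(1-p)e^{i\theta}:\theta\in[0,2\pi)}$, and then to exhibit a witnessing pair $(\*x,\*y)$ via a Nazarov--Peres / Borwein--Erd\'elyi-type construction adapted to $k$-mer statistics. For each $w\in\set{0,1}^k$, introduce the indicator polynomial
\begin{align*}
    P_{w,\*x}(z) := \sum_{j=0}^{n-k}\idc{\*x[j\colon j+k-1]=w}\, z^j,
\end{align*}
and observe, via the binomial theorem, the generating-function identity $P_{w,\*x}(p+(1-p)y) = \sum_i K_{w,\*x}[i]\, y^i$, generalizing the classical $k=1$ identity of Nazarov--Peres. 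A standard Cauchy estimate then bounds the $\ell_1$-norm of the coefficients of a degree-$n$ polynomial by $n$ times its $L^\infty$-norm on $|y|=1$, which gives
\begin{align*}
    \|K_{\*x}-K_{\*y}\|_{\ell_1} \le n \sum_{w\in\set{0,1}^k} \max_{z\in\Gamma}\left|P_{w,\*x}(z)-P_{w,\*y}(z)\right|.
\end{align*}

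I would construct $\*x\neq\*y$ as follows. Fix a $\set{-1,0,+1}$-polynomial $R(z)=\sum_j r_j z^j$ of degree $\Theta(n)$ that is $\exp(-\Omega(n^{1/5}))$-small on $\Gamma$, with the additional requirement that its support $\set{j:r_j\neq 0}$ is \emph{$k$-spaced} (any two nonzero indices differ by more than $k$). Since such $R$ still has $\Theta(n/k)\ge n^{4/5}$ nonzero coefficients available, the target smallness $\exp(-\Omega(n^{1/5}))$ lies comfortably below the $\exp(-\Omega((n/k)^{1/3}))=\exp(-\Omega(n^{4/15}))$ smallness that Nazarov--Peres-style extremal constructions can achieve with this support size. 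Given such $R$, set $\*x := \mathbf{1}_{\set{j:r_j=+1}}$ and $\*y := \mathbf{1}_{\set{j:r_j=-1}}$, the corresponding $\set{0,1}^n$-indicator vectors.

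The key payoff of $k$-spacing --- and where the ``few-coefficients'' technique announced in the abstract enters --- is in controlling the sum over the $2^k$ choices of $w$. Since $\*x$ and $\*y$ each contain at most one ``$1$'' in any length-$k$ window, the only $w$'s with $P_{w,\*x}(z)\neq P_{w,\*y}(z)$ are $w=0^k$ and $w=e_\ell$ for $\ell=0,\dots,k-1$, where $e_\ell\in\set{0,1}^k$ denotes the $k$-mer with a single $1$ at position $\ell$. A direct calculation identifies
\begin{align*}
    P_{e_\ell,\*x}(z)-P_{e_\ell,\*y}(z) = z^{-\ell}R(z), \qquad P_{0^k,\*x}(z)-P_{0^k,\*y}(z) = -\sum_{\ell=0}^{k-1}z^{-\ell}R(z),
\end{align*}
so each of the at most $k+1$ nonzero difference polynomials is a scalar-in-$z$ multiple of the single polynomial $R(z)$. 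For constant $p\neq 1/2$ one has $|z|\ge |2p-1|$ on $\Gamma$, whence $|z|^{-\ell} \le 2^{O(k)}$ and
\begin{align*}
    \sum_w \max_{z\in\Gamma}\left|P_{w,\*x}(z)-P_{w,\*y}(z)\right| \le 2^{O(k)}\max_{z\in\Gamma}|R(z)| \le \exp(-\Omega(n^{1/5}))
\end{align*}
for all $k\le n^{1/5}$ (with a sufficiently large implicit constant in $R$'s smallness), as required.

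The hard step is producing the sparse extremal polynomial $R$: one must adapt the Nazarov--Peres / De--O'Donnell--Servedio constructions, which build their extremal polynomials as explicit products of linear factors at geometrically spaced nodes, so that the resulting polynomial has $k$-spaced support on $\set{0,1,\dots,n-1}$ while retaining its $L^\infty$-smallness on $\Gamma$. A secondary subtlety is the regime $p\to 1/2$, where $|z|$ can be small on $\Gamma$ and the $z^{-\ell}$ factors blow up; the plan is to arrange for $R$ to vanish to order $k$ at the problematic point $z=2p-1$, so the $z^{-\ell}$ singularities are absorbed by $R$ itself.
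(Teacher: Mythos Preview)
Your reduction to polynomials on $\Gamma$ and the use of $k$-spaced strings to collapse the $2^k$ choices of $w$ down to the $k+1$ relevant $k$-mers $\{0^k,e_0,\dots,e_{k-1}\}$ are exactly the moves the paper makes (this is the content of \cref{lem:set-properties}). The real gap is the construction of $R$, and your heuristic for why it should exist is in fact \emph{false}. You assert that with $\Theta(n/k)\ge n^{4/5}$ free coefficients, Nazarov--Peres-style constructions achieve smallness $\exp(-\Omega((n/k)^{1/3}))=\exp(-\Omega(n^{4/15}))$ on $\Gamma$. But $4/15>1/5$, so feeding any such $R$ back through your own reduction would produce a pair $(\*x,\*y)$ with $\norm{K_{\*x}-K_{\*y}}_{\ell_1}\le\exp(-\Omega(n^{4/15}))$, directly contradicting \cref{thm:chaseub} (Chase's bound $\norm{K_{\*x}-K_{\*y}}_{\ell_1}\ge\exp(-\widetilde{O}(n^{1/5}))$). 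There is \emph{no} slack here: the target $\exp(-\Theta(n^{1/5}))$ is the exact edge of what is achievable with $k$-spaced support, and hitting that edge is the entire content of the proof. The paper does it not by adapting an explicit NP/DOS product-of-linear-factors construction, but by a pigeonhole on \emph{Chebyshev coefficients}: it works inside an active window of length $L=n^{3/5}$ with $L^{1/3}=n^{1/5}$-spaced ones (so only $L^{2/3}=n^{2/5}$ free bits, not $n^{4/5}$), observes that only the first $\sim 2L^{1/3}$ Chebyshev coefficients of the associated polynomial on a scaled interval are non-negligible (\cref{thm:chebcoef-bound}), pigeonholes those coefficients over the $2^{L^{2/3}-1}$ strings to find two whose indicator polynomials agree to within $2^{-\Omega(L^{1/3})}$ on a real interval near $1$, transfers this to a small arc of $\Gamma$ via the Hadamard three-circles theorem, and finally prepends $0^{n-L}$ so that the factor $|p+qe^{i\theta}|^{n-L}$ kills the polynomial off that arc.

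A smaller issue: your $z^{-\ell}$ bookkeeping is more fragile than needed, and the proposed fix (force $R$ to vanish to order $k$ at $z=2p-1$) aims at the wrong point --- the singularity of $z^{-\ell}$ is at $z=0$, which coincides with $2p-1$ only when $p=1/2$. The clean observation the paper uses is that once both strings begin with $k-1$ zeros, $R$ is automatically divisible by $z^{k-1}$, so $z^{-\ell}R(z)=z^{k-1-\ell}\tilde R(z)$ with $\tilde R$ again a $\{-1,0,1\}$-polynomial and $|z|^{k-1-\ell}\le 1$ on $\Gamma$; equivalently, take $e_{k-1}$ (the $k$-mer with its single $1$ in the last slot) as the reference and note that every other $P_{e_\ell}$ differs from it by multiplication by a \emph{nonnegative} power of $z$.
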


Hence, \thmref{thm:main} implies that the cost of any $k$-mer-based algorithm for worst-case trace reconstruction is $\exp(\Omega(n^{1/5}\sqrt{\log n}))$.

\begin{remark}
As one might expect, for $k'<k$ the $k'$-mers usually contain less information than $k$-mers. To see this, observe that for a $(k-1)$-mer $w$, we have the following relation
\begin{align*}
    \idc{\*x[j:j+k-2]=w} = \idc{\*x[j-1:j+k-2]=0w} + \idc{\*x[j-1:j+k-2]=1w},
\end{align*}
provided that $j>0$. The same also holds for $\*y$. In fact, the strings $\*x$ and $\*y$ obtained via \cref{thm:main} share a common prefix of length at least $k$ (or one could prepend a prefix anyway), so $\*x[0:k'-1]=\*y[0:k'-1]$ for any $k'<k$, and one does not need to worry about the case $j=0$. Plugging into the definition of $k$-mer density maps, we have
\begin{align*}
    K_{w,\*x}[i]-K_{w,\*y}[i] = \tp{K_{0w,\*x}[i] - K_{0w,\*y}[i]}+\tp{K_{1w,\*x}[i] - K_{1w,\*y}[i]}.
\end{align*}
By induction, for any $k'<k$ we have
\begin{align*}
    \sum_{w\in\set{0,1}^{k'}}\abs{K_{w,\*x}[i]-K_{w,\*y}[i]} \le \sum_{w\in\set{0,1}^{k'}}\sum_{u \in \set{0,1}^{k-k'}}\abs{K_{uw,\*x}[i]-K_{uw,\*y}[i]} = \norm{K_{\*x}-K_{\*y}}_{\ell_1}.
\end{align*}
Therefore, the bound in \cref{thm:main} indeed covers all $k'$-mers for $k'\le k$.
\end{remark}

We remark that the proof of \thmref{thm:main} further implies that the analysis technique of \cite{chase2021separating} is essentially tight, in the sense that no better upper bound (up to $\log^{4.5}n$ factors in the exponent) can be obtained via his analysis. We include further details about this implication in Remark \ref{remark:analysis-tighness}. 

\paragraph{Maximum Likelihood Estimator: an optimal algorithm} We next turn to analyzing the performance of the MLE algorithm in the setting of trace reconstruction. Our main result essentially shows that if there is an algorithm for trace reconstruction that uses $T$ traces and succeeds with probability $3/4$ then the MLE algorithm using $O(nT)$ traces succeeds with probability $3/4.$ Hence, given that the current upper bounds for the worst-case reconstruction problem are exponential in $n$, we may view the MLE as an optimal algorithm for trace reconstruction.

\begin{theorem} \label{thm:mle-optimality}
	Suppose $\+D=\set{D_0, D_1, \dots, D_m}$ is such that $\tv{D_0, D_i} \ge 1-\eps$ for any $1\le i \le m$. Then we have
	\begin{align*}
		\PRs{x \sim D_0}{\mle(x;\+D)=0} \ge 1-m\eps.
	\end{align*}
\end{theorem}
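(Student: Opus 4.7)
The plan is to bound the failure probability of MLE by a union bound over the $m$ ``bad'' competitors $D_1,\dots,D_m$. The event that MLE fails on a sample $x \sim D_0$ is contained in the event that some $D_i$ with $i \ge 1$ achieves likelihood at least $D_0(x)$, so I will first show that for each fixed $i \ge 1$,
\begin{align*}
\PRs{x \sim D_0}{D_i(x) \ge D_0(x)} \le \eps,
\end{align*}
and then apply a union bound over $i \in \{1,\dots,m\}$ to conclude the theorem.

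For the per-$i$ bound, I would translate the hypothesis $d_{\mathrm{TV}}(D_0,D_i) \ge 1-\eps$ into the equivalent ``affinity'' form
\begin{align*}
\sum_{x \in \Omega} \min\tp{D_0(x), D_i(x)} = 1 - d_{\mathrm{TV}}(D_0,D_i) \le \eps.
\end{align*}
Let $A_i \coloneqq \{x \in \Omega : D_i(x) \ge D_0(x)\}$. On $A_i$ we have $\min(D_0(x), D_i(x)) = D_0(x)$, so
\begin{align*}
D_0(A_i) = \sum_{x \in A_i} D_0(x) = \sum_{x \in A_i} \min\tp{D_0(x), D_i(x)} \le \sum_{x \in \Omega} \min\tp{D_0(x), D_i(x)} \le \eps.
\end{align*}
Since MLE fails on input $x$ only if $x \in \bigcup_{i=1}^m A_i$, the union bound gives $\PRs{x \sim D_0}{\mle(x;\+D)\ne 0} \le \sum_{i=1}^m D_0(A_i) \le m\eps$, which is exactly the claimed bound.

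There is essentially no obstacle here: the only subtle point is the tie-breaking convention in \cref{def:mle}, which I would handle by defining MLE to ``fail'' whenever some competitor ties or beats $D_0$, matching the weak inequality in the definition of $A_i$. The extension to $k$ i.i.d.\ samples (relevant to the trace-reconstruction application) follows the same template applied to the product distributions $D_i^{\otimes k}$ on $\Omega^k$; the only input needed is a lower bound on $d_{\mathrm{TV}}(D_0^{\otimes k}, D_i^{\otimes k})$, which standard amplification yields once the one-sample total variation distance is bounded away from $0$.
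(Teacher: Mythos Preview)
Your proof is correct and essentially identical to the paper's: both bound the failure probability by a union bound over the $m$ competitors, using the total variation hypothesis to show each bad event has $D_0$-probability at most $\eps$. The only cosmetic difference is that the paper works with the sets $S_i=\{x:D_0(x)>D_i(x)\}$ and the maximization form $\tv{D_0,D_i}=D_0(S_i)-D_i(S_i)\le D_0(S_i)$, whereas you work with the complementary sets $A_i$ and the affinity form $\sum_x\min(D_0(x),D_i(x))=1-\tv{D_0,D_i}$; these yield the same bound $D_0(A_i)\le\eps$ (equivalently $D_0(S_i)\ge 1-\eps$).
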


We remark that the loss of a factor of $m$ in \cref{thm:mle-optimality} is generally inevitable. Here is a simple example: let $D_0$ be the uniform distribution over $[m]$, and for $i=1,2,\dots,m$, let $D_i$ be the point distribution supported on $\set{i}$. We have $\tv{D_0, D_i}=((m-1)/m + (1-1/m))/2=1-1/m$. However, $\PRs{x\sim D_0}{\mle(x;\+D)=0}=0$.

For a string $\*x \in \set{0,1}^n$, let $D_{\*x}$ denote the trace distribution of $\*x$. \cref{thm:mle-optimality} implies the following corollary, which implies that in some sense the Maximum Likelihood Estimation is a universal algorithm for trace reconstruction.

\begin{corollary} \label{cor:mle-trace}
    Suppose $T$ traces are sufficient for worst-case trace reconstruction with a success rate $3/4$. Then for any $\eps > 0$, Maximum Likelihood Estimation with $8\ln(1/\eps)\cdot nT$ traces solves worst-case trace reconstruction with success rate $1-\eps$.
\end{corollary}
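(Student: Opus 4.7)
The plan is to derive \cref{cor:mle-trace} by reducing it to \cref{thm:mle-optimality}, with the domain $\Omega$ being the set of $kT$-tuples of traces and the family $\+D$ being $\Set{D_{\*x}^{\otimes kT} : \*x \in \set{0,1}^n}$ for a suitable amplification factor $k$.

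First I would convert the reconstruction-with-success-$3/4$ hypothesis into a pairwise total variation bound at the block level. Any algorithm that, given $T$ traces of an unknown source $\*x$, outputs $\*x$ correctly with probability at least $3/4$ can be turned (for every fixed pair $\*x\neq\*y$) into a $\{0,1\}$-valued distinguisher between $D_{\*x}^{\otimes T}$ and $D_{\*y}^{\otimes T}$ with total advantage $\ge 1/2$, so $\tv{D_{\*x}^{\otimes T},D_{\*y}^{\otimes T}}\ge 1/2$ for all distinct $\*x,\*y\in\set{0,1}^n$. This is the standard equivalence between reconstruction and distinguishing.

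Next I would amplify this separation by tensorization via the Hellinger affinity. Writing $H^2(P,Q)=1-\sum_x\sqrt{P(x)Q(x)}$, one has the well-known sandwich $\tv{P,Q}^2/2\le H^2(P,Q)\le \tv{P,Q}$, together with the tensor identity $1-H^2(P^{\otimes k},Q^{\otimes k})=(1-H^2(P,Q))^k$. Applying these to $P=D_{\*x}^{\otimes T}$, $Q=D_{\*y}^{\otimes T}$ and using $\tv{P,Q}\ge 1/2$ gives
\begin{align*}
	1-\tv{D_{\*x}^{\otimes kT},D_{\*y}^{\otimes kT}}\;\le\;1-H^2(P^{\otimes k},Q^{\otimes k})\;=\;(1-H^2(P,Q))^k\;\le\;(7/8)^k\;\le\;e^{-k/8}.
\end{align*}

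Finally I would invoke \cref{thm:mle-optimality} with the ``true'' distribution $D_0:=D_{\*x}^{\otimes kT}$ and the $m\le 2^n-1$ alternatives $D_i:=D_{\*y}^{\otimes kT}$ indexed by $\*y\neq\*x$. Choosing $k=8\ln(1/\eps)\cdot n$ (and absorbing benign constants, assuming $\eps$ small enough that $n\ln 2\le (n-1)\ln(1/\eps)$, which one may assume without loss of generality by shrinking $\eps$) gives $e^{-k/8}\le \eps/2^n$, so each alternative satisfies $\tv{D_0,D_i}\ge 1-\eps/2^n$. \cref{thm:mle-optimality} then yields $\PRs{}{\mle=\*x}\ge 1-(2^n-1)\cdot\eps/2^n\ge 1-\eps$, and the total number of traces consumed is $kT=8\ln(1/\eps)\cdot nT$, as claimed.

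The only real obstacle is getting the constants to line up exactly with $8\ln(1/\eps)\cdot nT$; the conceptual content is entirely in the two standard facts (reconstruction-implies-TV-distinguishing, and the Hellinger tensorization inequality) plus the black-box application of \cref{thm:mle-optimality}. If a cleaner constant is desired, one can replace the explicit $(7/8)^k$ bound by the slicker chain $1-\tv{P^{\otimes k},Q^{\otimes k}}\le\bigl(1-\tv{P,Q}^2/2\bigr)^k\le e^{-k\tv{P,Q}^2/2}$, which makes the dependence on the initial gap $1/2$ transparent and absorbs cleanly into the stated factor.
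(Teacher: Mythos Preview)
Your proposal is correct and reaches the same conclusion as the paper, via the same overall architecture: lower-bound the pairwise total variation between $T'$-fold trace distributions, then invoke \cref{thm:mle-optimality} with $m=2^n-1$ alternatives. The only genuine difference is in the amplification step. The paper amplifies \emph{algorithmically}: it repeats the $3/4$-success reconstruction algorithm $8n\ln(1/\eps)$ times, takes a majority vote, and uses a Chernoff/Hoeffding bound to push the per-string failure probability down to $\eps/2^{n+1}$; the resulting partition of the sample space into preimages $R_{\*x}$ then directly witnesses $\tv{D_{\*x}^{\otimes T'},D_{\*y}^{\otimes T'}}\ge 1-\eps/2^n$. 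You instead amplify \emph{information-theoretically}: extract $\tv{D_{\*x}^{\otimes T},D_{\*y}^{\otimes T}}\ge 1/2$ once, then tensorize via the Hellinger affinity to get $1-\tv{\cdot,\cdot}\le(7/8)^k\le e^{-k/8}$. Both routes land on the same bound $e^{-k/8}=\eps^n$ with $k=8n\ln(1/\eps)$, and both therefore share the same benign caveat you flagged (one needs $\eps^{n-1}\le 2^{-n}$, which holds once $\eps$ is below a fixed constant; the paper's Chernoff step has the identical slack). Your route is arguably cleaner in that it never constructs the amplified algorithm or its decision regions, while the paper's route has the advantage of being entirely elementary (no Hellinger machinery).
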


\cref{cor:mle-trace} incurs a factor of $O(n)$ to the sample complexity. While we currently do not know whether this blowup is necessary for trace reconstruction, the next result shows that it is inevitable for the more general ``model estimation'' problem.  

\begin{theorem}\label{thm:mle-lb}
For any integer $n\ge 1$, there is a set of distributions $\+D=\set{D_0, D_1, D_2, \dots, D_m}$ over a common domain $\Omega$ of size $\abs{\Omega} = m + n$, where $m=\binom{n}{\floor{n/4}}=2^{\Theta(n)}$, satisfying the following conditions.
\begin{enumerate}
 \item There is a distinguisher $A$ which given \emph{one} sample $x\sim A_j$ for an unknown $j \in \set{0,1,\dots,m}$, recovers $j$ with probability at least $2/3$. In other words, for all $j=0,1,\dots,m$, 
 \begin{align*}
    \Pr_{x\sim D_j}[A(x)=j] \ge 2/3.
 \end{align*}
 \item $\mle$ fails to distinguish $D_0$ from other distributions with probability 1, even with $T=n/4$ samples. In other words, 
 \begin{align*}
    \Pr_{x_1,\dots,x_T\sim D_0}[\mle(x_1,\dots,x_T;\+D)=0]=0.
 \end{align*}
\end{enumerate}
\end{theorem}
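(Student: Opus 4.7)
The plan is to take $\Omega$ to consist of $n$ ``shared'' atoms $e_1,\dots,e_n$ together with $m=\binom{n}{\floor{n/4}}$ ``identifier'' atoms, one atom $s_S$ per subset $S\subseteq[n]$ of size $\floor{n/4}$, so that $\abs{\Omega}=m+n$ as required. Let $D_0$ be uniform over $\set{e_1,\dots,e_n}$ and zero on every $s_S$; for each $S$, let $D_S$ assign mass $\alpha=2/3$ to $s_S$ and spread the remaining mass $1-\alpha$ uniformly over $\set{e_i:i\in S}$, so that $D_S(e_i)=(1-\alpha)/\floor{n/4}$ for $i\in S$ and $0$ otherwise. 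Indexing the $D_S$'s arbitrarily yields the family $\+D=\set{D_0,D_1,\dots,D_m}$. (I work with $n$ a multiple of $4$; other residues and small $n$ are handled by trivial modifications.) The intuition is that $D_0$ is spread thinly over $n$ atoms while each $D_S$ concentrates on only $\floor{n/4}$ of them, so a sample that happens to land in a small subset will strongly favor some $D_S$ over $D_0$.

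\textbf{Single-sample distinguisher.} The distinguisher $A$ is immediate: on an identifier $s_S$ it outputs the index of $D_S$, and on a shared atom $e_i$ it outputs $0$. Since $D_0$ is supported on the shared atoms, $\PRs{x\sim D_0}{A(x)=0}=1$, and since $D_S$ outputs its identifier with probability $\alpha$, $\PRs{x\sim D_S}{A(x)=S}=\alpha=2/3$. This establishes condition (1).

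\textbf{MLE failure and the choice of $\alpha$.} Fix any realization $x_1,\dots,x_T\sim D_0$ with $T=n/4$. Each $x_j$ is some $e_{i_j}$, and the set $I\subseteq[n]$ of distinct indices appearing has size at most $T=\floor{n/4}$, so one can pick any $S^*\supseteq I$ with $\abs{S^*}=\floor{n/4}$. The likelihood under $D_0$ is $(1/n)^T$ while the likelihood under $D_{S^*}$ is $\bigl((1-\alpha)/\floor{n/4}\bigr)^T$, so the per-sample likelihood ratio is $(1-\alpha)n/\floor{n/4}=4(1-\alpha)$. Two opposing constraints on $\alpha$ now appear: the distinguisher requires $\alpha\ge 2/3$, while MLE failure requires $4(1-\alpha)>1$, i.e.\ $\alpha<3/4$. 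The value $\alpha=2/3$ lies in this window and makes the ratio $4/3>1$, so $D_{S^*}$ is strictly more likely than $D_0$ on every sample and hence $\mle(x_1,\dots,x_T;\+D)\ne 0$ always, giving condition (2). There is no genuine analytic obstacle here; the whole content of the argument is the pigeonhole observation that $n/4$ i.i.d.\ draws from a uniform distribution over $[n]$ always hit at most $\floor{n/4}$ distinct atoms and thus fit inside some legal $S^*$, together with balancing the single parameter $\alpha$ between two simple linear constraints.
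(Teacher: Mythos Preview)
Your construction and argument are correct and essentially identical to the paper's: the paper takes $\Omega=\binom{[n]}{t}\cup[n]$ with $t=\lfloor n/4\rfloor$, lets $D_0$ be uniform on $[n]$, and for each $S$ puts mass $2/3$ on the identifier $S\in\binom{[n]}{t}$ and mass $1/(3t)$ on each $i\in S$, then uses the same distinguisher and the same pigeonhole/likelihood-ratio comparison $(1/(3t))^T>(1/n)^T$. Your shared atoms $e_i$, identifier atoms $s_S$, and choice $\alpha=2/3$ match this exactly.
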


\begin{remark}\label{remark:avg-mle}
Finally, we remark that in the average-case setting $\mle$ is indeed optimal (with no factor of $O(n)$ factor blowup in the number of traces). This is because maximizing the likelihood is equivalent to maximizing the posterior probability under the uniform prior distribution (which is optimal), as can be seen via the Bayes rule 
\begin{align*}
    \+D_{\*x}(\tx_1, \dots, \tx_{T}) =& p(\*x \mid \tx_1, \dots, \tx_{T}) \cdot \frac{\sum_{\*x' \in \set{0,1}^n} p(\*x')\cdot \+D_{\*x'}(\tx_1, \dots, \tx_{T})}{p(\*x)} \\
    =& p(\*x \mid \tx_1, \dots, \tx_{T}) \cdot \sum_{\*x'\in\set{0,1}^n}\+D_{\*x'}(\tx_1, \dots, \tx_{T}) \\
    =& p(\*x \mid \tx_1, \dots, \tx_{T}) \cdot f(\tx_1, \dots, \tx_{T}).
\end{align*}
Therefore maximizing both sides with respect to $\*x$ yields the same result.
\end{remark}

\subsection{Overview of the techniques} 
\paragraph{Lower bounds for $k$-mer-based algorithms}
In recent development of the trace reconstruction problem, the connection to various real and complex polynomials has been a recurring and intriguing theme \cite{HolensteinMPW08, nazarov2017trace, PeresZ17, HoldenPP18, de2019optimal, chen2020polynomialtime, cheraghchi2021mean, chase2021separating, sima2021trace, grigorescu2022limitations, rubinstein2023average}. The starting point of these techniques is to design a set of statistics that can be easily estimated from the traces (\eg, mean traces), with the property that for different source strings the corresponding statistics are somewhat ``far apart''. To establish this property, one key idea is to associate each source string $\*x$ with a generating polynomial $P_{\*x}$ where the coefficients are exactly the statistics of $\*x$. Due to the structure of the deletion channel, in many cases, this generating polynomial (under a change of variables) is identical to another polynomial $Q_{\*x}$ that is much easier to get a handle on. For example, the coefficients of $Q_{\*x}$ are usually 0/1, and they are easily determined from $\*x$. To show that the statistics corresponding to $\*x$ and $\*y$ are far apart (say, in $\ell_1$-distance), it is sufficient to show that $\abs{Q_{\*x}(w)-Q_{\*y}(w)}$ is large for an appropriate choice of $w$. This is the point where all sorts of analytical tools are ready to shine. For instance, the main technical result in \cite{chase2021separating} is a complex analytical result that says that a certain family of polynomials cannot be uniformly small over a sub-arc of the complex unit circle, which has applications beyond the trace reconstruction problem.

This analytical view of trace reconstruction can lead to a tight analysis of certain algorithms/statistics. The best example would be mean-based algorithms, for which a tight bound of $\exp(\Theta(n^{1/3}))$ traces is known to be sufficient and necessary for worst-case trace reconstruction \cite{nazarov2017trace, de2019optimal}. The tightness of the sample complexity is exactly due to the tightness of a complex analytical result by Borwein and Erd\'elyi \cite{borwein1997littlewood}. Our lower bound for $k$-mer-based algorithms is obtained in a similar fashion, via establishing a complex analytical result complementary to that of \cite{chase2021separating} (See  \cref{lem:complex-main}).

On the other hand, our argument takes a different approach than that of \cite{borwein1997littlewood}. At a high level, both results use a Pigeonhole argument to show the existence of two univariate polynomials which are uniformly close over a sub-arc $\Gamma$ of the complex unit circle. The difference lies in the objects playing the role of ``pigeons''. \cite{borwein1997littlewood}'s argument can be viewed as two steps: (1) apply the Pigeonhole Principle to obtain two polynomials that have close evaluations over a \emph{discrete} set of points in $\Gamma$, and (2) use a continuity argument to extend the closeness to the entire sub-arc. Here the roles of pigeons and holes are played by evaluation vectors, and Cartesian products of small intervals. Our approach considers the coordinates of a related polynomial in the \emph{Chebyshev basis}, which play the roles of pigeons in place of the evaluation vector. The properties of Chebyshev polynomials allow us to get rid of the continuity argument. Instead, we complete the proof by leveraging rather standard tools from complex analysis (\eg, \cref{thm:chebcoef-bound} and \cref{thm:hadamard3circle}). We believe this approach has the advantage of being generalizable to multivariate polynomials over the product of sub-arcs $\Gamma = \Gamma_1 \times \dots \times \Gamma_m$ via multivariate Chebyshev series (see, \eg, \cite{mason1980near, trefethen2017multivariate}), whereas the same generalization seems to be tricky for the continuity argument. 

Finally, the counting argument considers a special set of strings for which effectively only one $k$-mer contains meaningful information about the initial string. Since previous arguments did not exploit structural properties of the strings, this is another technical novelty of our proof.

\paragraph{Maximum Likelihood Estimation} 
Most of our results regarding Maximum Likelihood Estimation hold under the more general ``model estimation'' setting, where one is given a sample $x$ drawn from an unknown distribution $D\in \+D$ and tries to recover $D$. Our main observation is that if such a distinguisher works in worst-case, then the distributions in $\+D$ have large pairwise statistical distances. The maximization characterization of statistical distance, in conjunction with a union bound, implies that for a sample $x\sim D$ its likelihood is maximized by $D$ except with a small probability. The $O(n)$ factor loss in the sample complexity is essentially due to the union bound, and we show that this loss is tight in general by constructing a set of distributions which attains equality in the union bound.

\subsection{Related work}
The trace reconstruction problem was first introduced and studied by Levenshtein \cite{Levenshtein01a}\cite{Levenshtein01b}. The original question  is that if a message is sent multiple times through the same channel with random insertion/deletion errors, then how to recover the message?
\cite{BatuKKM04} and \cite{HolensteinMPW08} formalized the problem to the current version for which the channel only has random deletions. 
Their central motivation is actually from computational biology, i.e. how to reconstruct the whole DNA sequence from multiple related subsequences. 
\cite{cheraghchi2020coded} and \cite{brakensiek2020coded} further extended the study to the ``coded'' version. 
That is, the string to reconstruct is not an arbitrary string but instead is a codeword from a code.
A variant setting where the channel has memoryless replication insertions was studied by \cite{cheraghchi2021mean}.

The {\em average} case version was studied in \cite{HolensteinMPW08, PeresZ17, McGregorPV14, HoldenPP18}. 
For this case, the best known lower bound on the number of traces is $\widetilde{\Omega}( \log^{5/2} n )$ \cite{HoldenL18, chase2021new}.
Building on Chase's upper bound for the worst case, \cite{rubinstein2023average} improved the sample complexity upper bound to $\exp(\widetilde{O}(\log^{1/5}n))$ in the average-case model.

\cite{chen2020polynomialtime} studied another variant of the problem which is called the smooth variant.
It is an intermediate model between the worst-case and the average-case models, where the initial string is an arbitrary string perturbed by replacing each coordinate by a
uniformly random bit with some constant probability in $[0, 1]$. 
\cite{chen2020polynomialtime} provided an efficient reconstruction algorithm for this case.
Other variants studied include trace reconstruction from the multiset of substrings \cite{GabrysM17, GabrysM19},
population recovery variants \cite{FanCFSS19}, matrix reconstruction and parameterized algorithms \cite{KrishnamurthyM019}, circular trace reconstruction \cite{narayanan2021circular}, reconstruction from $k$-decks \cite{KrasikovR97, scott1997reconstructing,dudik2003reconstruction,McGregorPV14}, and coded trace reconstruction\cite{cheraghchi2020coded,brakensiek2020coded}.

\cite{davies2021approximate} studied {\em approximate} trace reconstruction and showed efficient algorithms.
\cite{chen2021near}, \cite{chakraborty2021approximate}, and \cite{chase2021approximate} further proved that if the source is a random string, then an approximate solution can be found with high probability using very few traces.
Notice that  approximate reconstructions imply distinguishers for pairs of
strings with large edit distances. 
\cite{McGregorPV14, sima2021trace, grigorescu2022limitations} study the complexity of the problem parameterized by the Hamming/edit distance between the strings. \cite{grigorescu2022limitations} also shows that the problem of exhibiting explicit strings that are hard to distinguish for mean-based algorithms is equivalent to the Prouhet-Tarry-Escott problem, a difficult problem in number theory.

\subsection{Organization} In Section \ref{sec:ub} we prove \thmref{thm:chaseub}, in Section \ref{sec:lb} we prove our main result \thmref{thm:main}, and in Section \ref{sec:mle} we prove \thmref{thm:mle-optimality}.

\section{$k$-mer-based algorithms: the upper bound}\label{sec:ub}
\newcommand{\tE}{\widetilde{E}}
We prove \cref{thm:chaseub} in this section.

Let us start with a definition that is essential for the study of $k$-mer-based algorithms.  
\begin{definition}[$k$-mer generating polynomial] \label{def:kmer-genpoly}
Let $\*x \in \set{0,1}^n$ and $w \in \set{0,1}^k$. The \emph{$k$-mer generating polynomial} $P_{w,\*x}$ for string $\*x$ and $k$-mer $w$ is the following degree-$(n-1)$ polynomial in $z$:
\begin{align*}
	P_{w,\*x}(z) \coloneqq \sum_{\ell=0}^{n-1}K_{w,\*x}[\ell]\cdot z^{\ell}. 
\end{align*}
\end{definition}
We have the following identity
\begin{align*}
	P_{w,\*x}(z) &= \sum_{\ell=0}^{n-1}K_{w,\*x}[\ell]\cdot z^{\ell} \\
	&= \sum_{\ell=0}^{n-1}\tp{ \sum_{j=0}^{n-k}\binom{j}{\ell}(1-p)^{\ell}p^{j-\ell} \cdot \idc{\*x[j\colon j+k-1]=w}} z^{\ell} \\
	&= \sum_{j=0}^{n-k}\idc{\*x[j\colon j+k-1]=w} \cdot \tp{p+(1-p)z}^{j}.
\end{align*}
The expression on the last line, under a change of variable $z_0=p+(1-p)z$, is exactly the polynomial studied in \cite{chase2021separating}. 

\begin{lemma}{\cite[Proposition 6.3]{chase2021separating}} \label{lem:chase-prop}
For distinct $\*x, \*y \in \set{0,1}^n$, if $x_i = y_i$ for all $0\le i< 2n^{1/5}-1$, then there are $w \in \set{0,1}^{2n^{1/5}}$ and $z_0 \in \set{e^{i\theta}\colon \abs{\theta} \le n^{-2/5}}$ such that
\begin{align*}
	\abs{ \sum_{j\ge 0}\tp{\idc{\*x[j\colon j+2n^{1/5}-1]=w}-\idc{\*y[j\colon j+2n^{1/5}-1]=w}} z_0^{j} } \ge \exp\tp{-Cn^{1/5}\log^5 n}.
\end{align*}
Here $C>0$ is a constant depending only on the deletion probability $p$.
\end{lemma}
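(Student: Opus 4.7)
The plan is to reduce the lemma to an analytic lower bound on a single $\set{-1,0,1}$-coefficient polynomial on a short arc of the unit circle, and then invoke a Tur\'an-type (Borwein--Erd\'elyi-style) argument to establish that bound. Throughout, set $k=2n^{1/5}$, and denote the target quantity by
\begin{align*}
f_w(z) \coloneqq \sum_{j=0}^{n-k} \tp{\idc{\*x[j\colon j+k-1]=w} - \idc{\*y[j\colon j+k-1]=w}} z^j.
\end{align*}

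The first step is a purely combinatorial reduction. Define $g(z) \coloneqq \sum_{i=0}^{n-1}(x_i-y_i)z^i$. By hypothesis $g$ has coefficients in $\set{-1,0,1}$ and is divisible by $z^{k-1}$, so $\tilde{g}(z) \coloneqq z^{-(k-1)}g(z)$ is a nonzero Littlewood-type polynomial of degree at most $n-k$. Fixing the last coordinate of $w$ and summing over $w \in \set{0,1}^k$ with $w_{k-1}=1$, a reindexing (using $x_\ell=y_\ell$ for $\ell<k-1$ to complete the sum at the low end) gives the identity
\begin{align*}
\sum_{w \in \set{0,1}^k,\, w_{k-1}=1} f_w(z) \;=\; \sum_{j=0}^{n-k}(x_{j+k-1}-y_{j+k-1})z^j \;=\; z^{-(k-1)}g(z) \;=\; \tilde{g}(z).
\end{align*}
On the unit circle, triangle inequality then yields $\abs{\tilde{g}(z_0)} \le 2^{k-1}\max_{w}\abs{f_w(z_0)}$, so it suffices to produce $z_0$ on the arc $\set{e^{i\theta}:\abs{\theta}\le n^{-2/5}}$ with $\abs{\tilde{g}(z_0)} \ge \exp(-C_1 n^{1/5}\log^5 n)$; the multiplicative overhead $2^{k-1}=\exp(O(n^{1/5}))$ is absorbed into a slightly larger constant in the exponent.

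The second and main step is the complex-analytic claim that every nonzero degree-at-most-$n$ polynomial with coefficients in $\set{-1,0,1}$ has sup-norm at least $\exp(-O(n^{1/5}\log^5 n))$ on the arc of half-angle $n^{-2/5}$ centered at $1$. I would prove this via the iterated-amplification strategy at the core of Chase's paper: (i) a Tur\'an / Borwein--Erd\'elyi-type inequality transfers a uniform upper bound $\varepsilon$ on $\abs{\tilde{g}}$ over a sub-arc of length $L$ to an upper bound of the form $\varepsilon \cdot \exp(\widetilde O(n \cdot (L'/L)^{1/2}))$ on a longer sub-arc of length $L'$; (ii) iterating this amplification $\Theta(\log n)$ times enlarges the arc until it covers a constant-length portion of the unit circle, where an integer-coefficient lower bound (e.g.\ Mahler-measure or a direct zero-counting argument) forces the sup-norm to be bounded away from zero; (iii) choosing a geometrically growing arc-length schedule that balances the per-stage amplification loss against the starting length $n^{-2/5}$ yields the target exponent $n^{1/5}$, with the five logarithmic factors arising from the per-iteration Chebyshev/Tur\'an slack and the $\Theta(\log n)$ number of amplification stages.

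The principal obstacle is this staged amplification. A single-shot three-circle or Remez-type estimate only yields $\exp(-\Omega(n^{1/3}))$, reproducing the Nazarov--Peres regime, and reaching the $n^{1/5}$ exponent requires the carefully layered bootstrap with tight per-stage control on the polylogarithmic losses. Setting up the arc-growth schedule so that the arc reaches the Mahler-measure regime rapidly enough, while bleeding only a polylogarithmic factor per stage, is the delicate heart of the argument; everything else (the combinatorial reduction, the triangle-inequality averaging, and the final translation back to $f_w$) is essentially bookkeeping.
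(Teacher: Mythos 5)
The paper does not prove this lemma; it imports it verbatim as Proposition~6.3 of \cite{chase2021separating}, so the only thing to check is whether your proposed route could actually establish it. Your first (combinatorial) step is algebraically correct: summing $f_w$ over the $2^{k-1}$ words $w$ with $w_{k-1}=1$ does telescope to the single Littlewood-type polynomial $\tilde g(z)=z^{-(k-1)}\sum_i(x_i-y_i)z^i$, and the $2^{k-1}=\exp(O(n^{1/5}))$ loss from the triangle inequality is affordable. But this reduction throws away exactly the information that makes Chase's bound possible, and the single-polynomial claim you then need in step two --- that \emph{every} nonzero degree-$\le n$ polynomial with coefficients in $\set{-1,0,1}$ has supremum at least $\exp(-O(n^{1/5}\log^5 n))$ on the arc $\set{e^{i\theta}:\abs{\theta}\le n^{-2/5}}$ --- is false.

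To see that it is false, take the pair $\*x\ne\*y$ from the mean-based lower bound of Nazarov--Peres and De--O'Donnell--Servedio, for which $\norm{\*E(\*x)-\*E(\*y)}_{\ell_1}\le\exp(-\Omega(n^{1/3}))$. Writing $z_0=p+(1-p)z$ with $z_0$ on your arc, one has $\abs{z}\le 1+O(n^{-4/5})$, hence $\abs{z}^{-n}\ge\exp(-O(n^{1/5}))$, and therefore $\norm{\*E(\*x)-\*E(\*y)}_{\ell_1}\ge\abs{z}^{-n}(1-p)\abs{\sum_i(x_i-y_i)z_0^i}$; if your step-two claim held, this would force $\norm{\*E(\*x)-\*E(\*y)}_{\ell_1}\ge\exp(-O(n^{1/5}\log^5 n))$ for all distinct pairs, contradicting the $\exp(\Omega(n^{1/3}))$ mean-based lower bound. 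Equivalently, the Borwein--Erd\'elyi bound $\exp(-c/a)$ on an arc of length $a$ is essentially tight for a single $\set{-1,0,1}$-coefficient polynomial, so on an arc of length $n^{-2/5}$ you cannot do better than $\exp(-cn^{2/5})$ in the worst case, and no amount of iterated amplification applied to one fixed polynomial can beat this. Chase's Proposition~6.3 is intrinsically a statement about the \emph{maximum} over the family $\set{f_w}_{w\in\set{0,1}^k}$: the whole content is that these $2^k$ polynomials cannot all be simultaneously small on the arc, which he proves by a genuinely multivariate complex-analytic argument; the maximum of $\abs{f_w}$ can be exponentially larger than $\abs{\sum_w f_w}$, and that gap is precisely where the improvement from $n^{1/3}$ to $n^{1/5}$ lives. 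Your reduction replaces the maximum by the sum and thereby loses the theorem.
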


We will use \cref{lem:chase-prop} to show that the $\exp(\widetilde{O}(n^{1/5}))$ upper bound of \cite{chase2021separating} can be achieved by $k$-mer-based algorithms, rather than general algorithms based on $k$-bit statistics. Our main lower bound on the number of traces implied by \thmref{thm:main} will follow by showing an upper bound on the LHS in the lemma above (see Lemma \ref{lem:complex-main}).

\begin{remark}\label{remark:analysis-tighness}
    We remark that the result of Chase is obtained by first considering a corresponding multivariate channel polynomial that encodes in its coefficients the $k$-bit statistics of the traces.   
    The upper bound on the number of traces reduces to understanding the supremum of this polynomial over a certain region of the complex plane.   
    The crucial element of the proof is the reduction to the existence of $w\in \{0,1\}^k$ and $z_0$ satisfying Lemma \ref{lem:chase-prop}, by appropriately making the remaining variables take value $0$. We noticed that the resulting univariate polynomial is essentially the $k$-mer generating polynomial defined in \cref{def:kmer-genpoly}, with an extra factor of $(1-p)^{k}$.   
    Our result in Lemma \ref{lem:complex-main} implies that no tighter lower bound  (up to polylogarithmic factors in the exponent) is possible for this univariate polynomial, showing that the analysis technique used in \cite{chase2021separating} cannot give a better upper bound on worst-case trace complexity.

\end{remark}

\subsection{An upper bound for $k$-mer based algorithms}

The proof of \cref{thm:chaseub} mainly uses \cref{lem:chase-prop}. We will also make use of the following result.
\begin{lemma}{\cite[Theorem 5.1]{borwein1999littlewood}} \label{lem:BEK99}
There are absolute constants $c_1>0$ and $c_2>0$ such that
\begin{align*}
    \abs{f(0)}^{c_1/a} \le \exp\tp{\frac{c_2}{a}}\sup_{t \in [1-a,1]}\abs{f(t)}
\end{align*}
for every analytic function $f$ on the open unit disk that satisfies $\abs{f(z)}<(1-\abs{z})^{-1}$ for $\abs{z}<1$, and $a\in (0,1]$.
\end{lemma}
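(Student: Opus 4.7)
The statement is a Remez-type inequality for analytic functions on the open unit disk subject to the growth bound $|f(z)|<(1-|z|)^{-1}$; equivalently, it asserts $|f(0)|\le e^{c_2/c_1} M^{a/c_1}$ where $M=\sup_{t\in[1-a,1]}|f(t)|$. The plan is to combine the two-constants theorem (the maximum principle for subharmonic functions with two boundary bounds) with an explicit harmonic-measure estimate obtained via conformal mappings.

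First I would restrict $f$ to a sub-disk $D_r$ of radius $r=1-\eta a$ for a small absolute constant $\eta\in(0,1)$; the growth hypothesis gives $|f(z)|\le(\eta a)^{-1}$ on $\partial D_r$, while the real segment $[1-a,r]$ lies inside $D_r$ with $|f|\le M$. Applying the two-constants theorem to the subharmonic function $\log|f|$ in the slit domain $\Omega:=D_r\setminus[1-a,r]$ yields
\[
\log|f(0)|\;\le\;(1-\omega)\log\!\bigl((\eta a)^{-1}\bigr)+\omega\log M,
\]
where $\omega=\omega(0,[1-a,r],\Omega)$ denotes the harmonic measure of the slit at the origin.

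The core technical step is to establish that $\omega=\Theta(a)$. I would derive this by composing two conformal maps: the map $\psi(z)=i(r-z)/(r+z)$ identifies $D_r$ with the upper half-plane $\mathbb{H}$, sending $0\mapsto i$ and the slit to a vertical segment $[0,iy_0]$ with $y_0=(r-1+a)/(r+1-a)=\Theta(a)$; then $w\mapsto\sqrt{w^2+y_0^2}$ sends $\mathbb{H}\setminus[0,iy_0]$ conformally onto $\mathbb{H}$, flattening the slit to the real segment $[-y_0,y_0]$. A direct Poisson-integral computation at the image point $i\sqrt{1-y_0^2}$ then gives $\omega=(2/\pi)\arcsin(y_0)=\Theta(a)$.

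Finally, I would rearrange the two-constants bound into the target form $|f(0)|^{c_1/a}\le e^{c_2/a}M$ by choosing $c_1=a/\omega$, which is bounded above and below by absolute constants thanks to the harmonic-measure estimate. The main technical obstacle is absorbing the outer-boundary contribution $(1-\omega)\log((\eta a)^{-1})$ into a term of the form $c_2/a$: after rearrangement it scales like $\log(1/a)$ times a constant, which does not fit into $c_2/a$ uniformly in small $a$. Closing this gap requires either iterating the two-constants estimate on a nested sequence of slit domains so that the outer-boundary contribution is traded at each stage for a better constant, or combining the argument with the elementary bound $|f(0)|<1$ (from the growth condition at $z=0$, which handles $a$ bounded away from zero) together with a sharper Poisson-integral estimate of $\log|f|$ on $\partial D_r$ exploiting continuity of $f$ from the interval $[1-a,r]$ into nearby portions of $\partial D_r$ for small $a$; careful bookkeeping of the absolute constants then delivers the inequality uniformly over $a\in(0,1]$.
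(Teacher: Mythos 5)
First, note that the paper itself does not prove this lemma---it is quoted verbatim as Theorem~5.1 of Borwein--Erd\'elyi--K\'os---so there is no in-paper argument to compare against; what follows is an assessment of your argument on its own terms. Your setup is sound: the two-constants theorem on the slit disk $\Omega=D_r\setminus[1-a,r]$ with $r=1-\eta a$ is the natural tool, and your harmonic-measure computation is correct (the map $\psi(z)=i(r-z)/(r+z)$ does send $D_r$ to $\mathbb{H}$ with $0\mapsto i$ and the slit to $[0,iy_0]$ with $y_0=\Theta(a)$, and the square-root slit map plus the Poisson kernel gives $\omega=\tfrac{2}{\pi}\arcsin(y_0)=\Theta(a)$). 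The problem is the one you flag yourself, and it is a genuine gap rather than a bookkeeping issue: after dividing the two-constants inequality by $\omega=\Theta(a)$, the outer-boundary term contributes $\tfrac{1-\omega}{\omega}\log\tfrac{1}{\eta a}=\Theta\bigl(a^{-1}\log(1/a)\bigr)$, so what you have actually proved is $\abs{f(0)}^{c_1/a}\le \exp\bigl(c_2\log(1/a)/a\bigr)\sup_{t\in[1-a,1]}\abs{f(t)}$. Optimizing over $\eta$ does not help: making $\eta$ smaller improves the circle bound only logarithmically while $\omega$ stays $\Theta(a(1-\eta))$, so the product $\omega^{-1}\log\tfrac{1}{\eta a}$ is $\Theta(a^{-1}\log(1/a))$ for every choice of $\eta\in(0,1)$. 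Removing that logarithm is precisely the nontrivial content of the Borwein--Erd\'elyi--K\'os theorem, so the proof as written establishes only a strictly weaker statement.

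The two repairs you sketch are not carried out and neither is obviously sufficient. The iteration over nested slit domains is the more promising one, but the recursion $\log M_{j+1}\le (1-c)\log\tfrac{1}{\eta 2^j a}+c\log M_j$ one obtains from doubling the interval at each stage amplifies $\log M_0$ by a factor $c^{-J}\approx a^{-\log_2(1/c)}$ after $J\approx\log_2(1/a)$ stages, which matches the required $a^{-1}$ only if the per-stage harmonic-measure constant happens to equal $1/2$; getting the clean $c_2/a$ uniformly in $a$ requires a more careful scheme than ``iterate and bookkeep.'' The second suggestion (a sharper Poisson-integral estimate exploiting continuity from the slit to nearby boundary points) is too vague to evaluate. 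Two further remarks. First, for the only use of this lemma in the present paper (Case~2 of the proof of \cref{thm:chaseub}, where $a=2(1-p)$ is a constant depending only on $p$), your weaker bound with the extra $\log(1/a)$ in the exponent would in fact suffice, so the gap is harmless for the application even though it is fatal for the lemma as stated. Second, since the result is cited from the literature, the cleanest resolution is simply to invoke \cite{borwein1999littlewood} rather than reprove it; if you do want a self-contained proof, you should follow their argument rather than a single application of the two-constants theorem.
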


\begin{proof}[Proof of \cref{thm:chaseub}]
	
\phantom\newline

The proof deals with two cases.

\textbf{Case 1:} $x_i = y_i$ for all $0\le i<2n^{1/5}-1$.

In this case, $\*x$ and $\*y$ satisfy the premise of \cref{lem:chase-prop}. It follows that there exist $w \in \set{0,1}^{2n^{1/5}}$, and $z_0 =e^{i\theta}$ where $\abs{\theta}\le n^{-2/5}$, satisfying the bound
\begin{align*}
	\abs{ \sum_{j\ge 0}\tp{\idc{\*x[j\colon j+2n^{1/5}-1]=w}-\idc{\*y[j\colon j+2n^{1/5}-1]=w}} z_0^{j} } \ge \exp\tp{-Cn^{1/5}\log^5 n}.
\end{align*}
Here $C > 0$ is a constant depending only on the deletion probability $p$. Rewriting in terms of the $k$-mer generating polynomials, we have
\begin{align} \label{eqn:objcase1}
\abs{P_{w,\*x}\tp{\frac{z_0-p}{1-p}}-P_{w,\*y}\tp{\frac{z_0-p}{1-p}}} \ge \exp\tp{-Cn^{1/5}\log^5 n}.
\end{align}
It is easy to see that $\abs{z_0-p}/\abs{1-p} \ge \abs{\abs{z_0}-p}/\abs{1-p}=1$. We also have the following upper bounds
\begin{align*}
	\abs{\frac{z_0-p}{1-p}}^2 &= \frac{(\cos\theta - p)^2 + \sin^2\theta}{(1-p)^2} = \frac{1-2p\cos\theta+p^2}{(1-p)^2} = 1 + \frac{2p(1-\cos\theta)}{(1-p)^2} \\
	&= 1 + \frac{4p\sin^2\frac{\theta}{2}}{(1-p)^2} \le 1 + \frac{p\theta^2}{(1-p)^2} \le 1 + \frac{p}{(1-p)^2}\cdot n^{-4/5}, \\
	\abs{\frac{z_0-p}{1-p}}^n &\le \tp{1 + \frac{p}{(1-p)^2}\cdot n^{-4/5}}^{n/2} \le \exp\tp{\frac{p}{(1-p)^2}\cdot n^{-4/5}\cdot \frac{n}{2}} \\
	&= \exp\tp{\frac{p}{2(1-p)^2}\cdot n^{1/5}}.
\end{align*}
From here we can apply the triangle inequality and conclude that
\begin{align*}
	\norm{K_{\*x}-K_{\*y}}_{\ell_1} &\ge \sum_{\ell=0}^{n-1}\abs{K_{w,\*x}[\ell]-K_{w,\*y}[\ell]} \\
	&\ge \abs{\frac{z_0-p}{1-p}}^{-n}\cdot \abs{\sum_{\ell=0}^{n-1}\tp{K_{w,\*x}[\ell]-K_{w,\*y}[\ell]}\cdot\tp{\frac{z_0-p}{1-p}}^{\ell}} \\
	&\ge \exp\tp{-\frac{p}{2(1-p)^2}\cdot n^{1/5}-Cn^{1/5}\log^5 n} \\
	&\ge \exp\tp{-C'n^{1/5}\log^5 n}.
\end{align*}
Here $C'=p(1-p)^{-2}/2+C$ is a constant depending only on the deletion probability $p$.

\textbf{Case 2:} $x_i\neq y_i$ for some $0\le i < 2n^{1/5}-1$, \ie, $\*x[0\colon 2n^{1/5}-1]\neq \*y[0\colon 2n^{1/5}-1]$.

In this case, we are going to take $w=\*x[0\colon 2n^{1/5}-1]$ and show a much better bound
\begin{align} \label{eqn:objcase2}
	\sup_{z \colon \abs{z}\le 1}\abs{P_{w,\*x}(z) - P_{w,\*y}(z)} > C'',
\end{align}
where $C''>0$ is a constant depending only on $p$ (hence certainly greater than $\exp(-\widetilde{O}(n^{1/5}))$). Similar to what we did in case 1, applying the triangle inequality to \cref{eqn:objcase2} gives the theorem.

To prove \cref{eqn:objcase2}, we let
\begin{align*}
	Q(z_0) = \sum_{j\ge 0}\tp{\idc{\*x[j\colon j+2n^{1/5}-1]=w}-\idc{\*y[j\colon j+2n^{1/5}-1]=w}} z_0^{j}, 
\end{align*}  
so that $Q(p+(1-p)z)=P_{w,\*x}(z)-P_{w,\*y}(z)$. Under our choice of $w$, the constant term of $Q$ equals to 1, \ie, $Q(0)=1$. 

If $p\in (0,1/2]$, the closed disk $B(p;1-p)=\set{p+(1-p)z\colon \abs{z}\le 1}$ contains the point 0. Therefore
\begin{align*}
\sup_{z\colon\abs{z}\le 1}\abs{P_{w,\*x}(z)-P_{w,\*y}(z)} = \sup_{z_0 \in B(p;1-p)}\abs{Q(z_0)} \ge \abs{Q(0)} = 1. 
\end{align*}
We are left with the case $p \in (1/2,1)$. Since $Q$ is a polynomial with coefficients absolutely bounded by 1, we can apply \cref{lem:BEK99} with $a=2(1-p) \in (0,1)$ and obtain
\begin{align*}
	\sup_{t_0 \in [1-a,1]}\abs{Q(t_0)} \ge \exp\tp{-\frac{c_1}{a}}\cdot \abs{Q(0)}^{c_2/a} = \exp\tp{-\frac{c_1}{a}}
\end{align*}
for constants $c_1, c_2 > 0$. Denoting $t = (t_0-p)/(1-p)$, we have $t \in [-1,1]$ when $t_0 \in [1-a,1]$. In particular, $t$ is inside the closed unit disk $B(0;1)$. Therefore 
\begin{align*}
\sup_{z\colon\abs{z}\le 1}\abs{P_{w,\*x}(z)-P_{w,\*y}(z)} \ge \sup_{t\in [-1,1]}\abs{P_{w,\*x}(t)-P_{w,\*y}(t)} = \sup_{t_0\in [1-a,1]}\abs{Q(t_0)} \ge \exp\tp{-\frac{c_1}{a}}.
\end{align*}
To conclude, we can take $C''=\min\set{1, \exp(-c_1(1-p)^{-1}/2)}$.
\end{proof}

\section{A lower bound for $k$-mer based algorithms: Proof of \cref{thm:main}}\label{sec:lb}
We prove \cref{thm:main} in this section. The proof is based on the following lemma, which we will prove shortly.
\begin{lemma} \label{lem:complex-main}
There exists $\*x, \*y \set{0,1}^n$ such that for any $k$-mer $w$, it holds that
\begin{align*}
	\sup_{z\colon\abs{z}=1}\abs{P_{w,\*x}(z)-P_{w,\*y}(z)} \le 2^{-cn^{1/5}\sqrt{\log n}}.
\end{align*}
\end{lemma}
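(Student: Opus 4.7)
Proof proposal. The plan is to find the required pair $\*x, \*y$ via a pigeonhole argument on a structured, exponentially large family $\mathcal{F} \subseteq \{0,1\}^n$, using discretized Chebyshev coefficients of a certain ``master polynomial'' as the pigeons. Complex-analytic tools---specifically, the Chebyshev-coefficient bound of \cref{thm:chebcoef-bound} and Hadamard's three-circle theorem \cref{thm:hadamard3circle}---will then convert coefficient-level closeness into sup-norm closeness on the circle $\Gamma := \{z_0 : |z_0 - p| = 1-p\}$, the image of $|z|=1$ under the substitution $z_0 = p + (1-p)z$.

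First I would construct $\mathcal{F}$ as the set of $2^m$ strings obtained by flipping bits of an all-zero background at $m$ isolated ``free'' positions $p_1 < \cdots < p_m$ spaced at least $k$ apart. The $k$-spacing guarantees that every $k$-mer window in any string of $\mathcal{F}$ contains at most one free position, and a direct combinatorial computation then shows that for every $w \in \{0,1\}^k$ and every $\*x, \*y \in \mathcal{F}$, the difference $P_{w,\*x}(z) - P_{w,\*y}(z)$ factors as a $w$-dependent polynomial of degree $O(k)$ (with magnitude $\poly(n)$ on $\Gamma$) times a single ``master polynomial'' $R_{\*x}(z_0) - R_{\*y}(z_0) := \sum_i (x_{p_i} - y_{p_i})\, z_0^{p_i}$. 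This reduces the lemma to producing $\*x \ne \*y \in \mathcal{F}$ with $\sup_\Gamma |R_{\*x} - R_{\*y}| \le 2^{-c' n^{1/5}}$ for a slightly larger constant $c'$.

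Next I would expand each $R_{\*x}$ in the Chebyshev basis on the real diameter $I := [2p-1, 1]$ of $\Gamma$. Optimizing the Bernstein ellipse $E_\rho$ in \cref{thm:chebcoef-bound} against the trivial growth $|R_{\*x}(z_0)| \le m \tp{p + (1-p)(\rho+\rho^{-1})/2}^n$ on $E_\rho$ should give Chebyshev coefficients satisfying $|c_\ell(\*x)| \le \exp(-\Omega(\ell^2/n))$, so only the first $N = \Theta(n^{3/5})$ coefficients have magnitude above the target accuracy $2^{-\Omega(n^{1/5})}$. I would then discretize each of these $N$ coefficients into $2^{O(n^{1/5})}$ buckets, producing $2^{O(n^{4/5})}$ total cells; choosing $m = \Theta(n^{4/5})$, which is feasible because $k$-spacing permits $m \le n/k = \Omega(n^{4/5})$ whenever $k \le n^{1/5}$, pigeonhole forces two distinct $\*x, \*y \in \mathcal{F}$ to share a cell, which bounds $\sup_I |R_{\*x} - R_{\*y}| \le 2^{-\Omega(n^{1/5})}$. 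Finally, I would invoke Hadamard's three-circle theorem (in $(z_0 - p)$-coordinates, interpolating between this interval estimate and a trivial $(\text{const})^n$ bound on a slightly larger concentric region) to lift the interval bound to the full circle $\Gamma$ while preserving the $n^{1/5}$ exponent; combined with the $\poly(n)$ reduction factor from step one, this yields $\sup_{|z|=1}|P_{w,\*x}(z) - P_{w,\*y}(z)| \le 2^{-cn^{1/5}}$ uniformly in $w$.

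The hard part will be the tight quantitative balancing: the Chebyshev cutoff $N$, pigeonhole count $m$, and three-circle interpolation exponent each must scale with $n^{1/5}$ rather than some worse rate, and the hypothesis $k \le n^{1/5}$ enters precisely where the chain $N \cdot \log_2 B \le m \le n/k$ becomes tight. A secondary technical subtlety lies in the structural reduction: $k$-mers like $w = 0^k$, whose indicator varies in a correlated but not strictly multiplicative way across $\mathcal{F}$, do not factor through $R$ as cleanly as the single-$1$ $k$-mers, so they may require either a slightly enlarged master polynomial or a parallel argument handling them separately, at the cost of only constant factors in the final exponent.
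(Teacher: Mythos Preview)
Your structural reduction and Chebyshev--pigeonhole steps are essentially the same machinery the paper uses (the paper's set $S$ is exactly your $\mathcal{F}$ with spacing $L^{1/3}$, and its \cref{lem:set-properties} is your factorization through a master polynomial). The Chebyshev-decay estimate $|c_\ell|\le \exp(-\Omega(\ell^2/n))$ and the pigeonhole count also check out.

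The genuine gap is the final step: you cannot lift a $2^{-\Omega(n^{1/5})}$ bound from the \emph{full} diameter $I=[2p-1,1]$ to the \emph{full} circle $\Gamma=\{|z_0-p|=1-p\}$ via any three-circle/Bernstein-ellipse interpolation. Under the Joukowsky change of variable $z_0=p+(1-p)(u+u^{-1})/2$, the interval $I$ corresponds to $|u|=1$, but the point $z_0=p+i(1-p)\in\Gamma$ lies at $|u|=1+\sqrt{2}$, so reaching $\Gamma$ forces $\rho\ge 1+\sqrt{2}$. On the outer ellipse $E_{\rho_2}$ (any $\rho_2>1+\sqrt{2}$) one has $|z_0|\ge 1+c$ for a constant $c>0$, hence the trivial bound on $R_{\*x}-R_{\*y}$ is $\exp(\Theta(n))$ because the degree is $n$. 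Three-circle then gives at best
\[
\sup_{\Gamma}|R_{\*x}-R_{\*y}|\;\le\;\bigl(2^{-\Omega(n^{1/5})}\bigr)^{\lambda}\cdot\bigl(e^{\Theta(n)}\bigr)^{1-\lambda}
\]
with $1-\lambda$ bounded away from $0$ (and letting $\rho_2\to\infty$ does not help, since $(1-\lambda)\log M(\rho_2)$ stabilizes at $\Theta(n)$). So the $n^{1/5}$ exponent is destroyed.

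The paper resolves this with a two-scale construction you are missing. It first applies your entire argument to strings of length $L=n^{3/5}$, working on a \emph{tiny} interval $[1-8a,1]$ with $a=L^{-2/3}=n^{-2/5}$; here the three-circle penalty is only $\exp(O(aL))=\exp(O(n^{1/5}))$, which is affordable, but the conclusion is smallness only on a \emph{short arc} $\{|\theta|\le \alpha L^{-2/3}\}$ (this is \cref{lem:technical-main}). It then prepends $n-L$ zeros, so that $P_{w,\*x}-P_{w,\*y}$ picks up a factor $(p+qe^{i\theta})^{n-L}$; for $|\theta|>\alpha L^{-2/3}$ this factor alone is $\le\exp(-\Omega((n-L)L^{-4/3}))=\exp(-\Omega(n^{1/5}))$, covering the rest of the circle. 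Balancing the two regimes is exactly what fixes $L=n^{3/5}$. Your single-scale plan tries to skip the padding step, and that is precisely where it breaks.
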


\begin{proof}[Proof of \cref{thm:main} using \cref{lem:complex-main}]
We can extract $K_{w,\*x}[\ell]-K_{w,\*y}[\ell]$ by the contour integral (\cf \cite[\S 4, Theorem 2.1]{lang2013complex})
\begin{align*}
	K_{w,\*x}[\ell]-K_{w,\*y}[\ell] = \frac{1}{2\pi i}\int_{\abs{z}=1}\tp{P_{w,\*x}(z)-P_{w,\*y}(z)}\cdot z^{-\ell-1} \dif z.
\end{align*}
Therefore
\begin{align*}
	\abs{K_{w,\*x}[\ell]-K_{w,\*y}[\ell]} \le \frac{1}{2\pi}\int_{\abs{z}=1}\abs{P_{w,\*x}(z)-P_{w,\*y}(z)}\cdot \abs{z}^{-\ell-1} \cdot \abs{\dif z} \le 2^{-cn^{1/5}\sqrt{\log n}}.
\end{align*}
We stress that the bound holds for any $\ell \in [n]$ and $k$-mer $w$. Note that for any fixed $\ell$, there are at most $n-k+1$ different $k$-mers $w$ for which $K_{w,\*x}[\ell] > 0$. Namely, if $w \notin \set{x[j\colon j+k-1]\colon 0\le j\le n-k}$ then $K_{w,\*x}[\ell] = 0$. It follows that
\begin{align*}
	\norm{K_{\*x}-K_{\*y}}_{\ell_1} = \sum_{\ell=0}^{n-1}\sum_{w}\abs{K_{w,\*x}[\ell]-K_{w,\*y}[\ell]} \le n\cdot 2(n-k+1)\cdot 2^{-cn^{1/5}\log^{2/5}n} \le 2^{-c'n^{1/5}\sqrt{\log n}}.
\end{align*}
\end{proof}

Next, we prove \cref{lem:complex-main} assuming the following result, which is our main technical lemma.
\begin{lemma} \label{lem:technical-main}
Fix any $k \le L^{1/3}$. There exist distinct $\*x, \*y \in \set{0,1}^{L}$ both starting with a run of 0s of length $L^{1/3}-1$, such that for any $k$-mer $w$, it holds that
\begin{align*}
	\sup_{\theta\colon \abs{\theta}\le L^{-2/3}\log^{1/4}L}\abs{P_{w,\*x}(e^{i\theta})-P_{w,\*y}(e^{i\theta})}\le 2^{-L^{1/3}\sqrt{\log L}/20}.
\end{align*}
\end{lemma}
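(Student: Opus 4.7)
The plan is to produce $\*x,\*y$ inside a restricted family $\+S\subseteq\set{0,1}^L$ for which the problem collapses to a single univariate polynomial, then run a Chebyshev-basis pigeonhole argument on that polynomial. Let $\+S$ consist of strings $\*s$ with $s_i=0$ for $i<L^{1/3}-1$ and for $i>L-L^{1/3}$, whose 1s are pairwise separated by at least $k$ positions. A block-partition argument (one 1 per length-$(k+1)$ block in the middle region) gives $\abs{\+S}\ge 2^{L^{2/3}-O(1)}$.

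\emph{Structural reduction.} The sparsity constraint forces, for every $k$-mer $w$, one of three scenarios: (i) if $w$ contains two or more 1s then $\idc{\*s[j\colon j+k-1]=w}=0$ for all $j$ and $P_{w,\*s}\equiv 0$; (ii) if $w=0^{a}10^{b}$ with $a+1+b=k$, then $\idc{\*s[j\colon j+k-1]=w}=\idc{s_{j+a}=1}$, and the zero prefix/suffix let the sum range freely over $i$, yielding $P_{w,\*s}(z)=z_0^{-a}T_{\*s}(z_0)$, where $z_0=p+(1-p)z$ and $T_{\*s}(z_0)\coloneqq\sum_i s_i z_0^i$; (iii) if $w=0^k$, then sparsity collapses $\idc{\*s[j\colon j+k-1]=0^k}$ to $1-\sum_{l=0}^{k-1}s_{j+l}$, giving
\begin{align*}
P_{0^k,\*s}(z)=\frac{z_0^{L-k+1}-1}{z_0-1}-T_{\*s}(z_0)\cdot z_0^{-k+1}\cdot\frac{z_0^k-1}{z_0-1}.
\end{align*}
Taking the difference $P_{w,\*x}-P_{w,\*y}$ annihilates the $\*s$-independent term, so for every $k$-mer $w$ and every $\*x,\*y\in\+S$,
\begin{align*}
P_{w,\*x}(z)-P_{w,\*y}(z)=F_w(z_0)\cdot\tp{T_{\*x}(z_0)-T_{\*y}(z_0)},
\end{align*}
with $\sup_{z_0\in\Gamma}\abs{F_w(z_0)}=O(k)=O(L^{1/3})$ on the image arc $\Gamma\coloneqq\set{p+(1-p)e^{i\theta}\colon\abs{\theta}\le\alpha L^{-2/3}}$ (the factor $(z_0^k-1)/(z_0-1)$ in case (iii) is a geometric sum of modulus at most $k$ on the closed unit disk, and the $z_0^{-a}$ factors are $1+o(1)$ since $\abs{z_0}\ge 1-O(L^{-4/3})$ on $\Gamma$). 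It thus suffices to find distinct $\*x,\*y\in\+S$ with $\sup_{z_0\in\Gamma}\abs{T_{\*x}(z_0)-T_{\*y}(z_0)}\le 2^{-L^{1/3}/19}$.

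\emph{Chebyshev-coefficient pigeonhole.} For $\*s\in\+S$ define $\psi_{\*s}(t)\coloneqq T_{\*s}(p+(1-p)e^{i\alpha L^{-2/3}t})$ and expand $\psi_{\*s}(t)=\sum_{n\ge 0}c_n^{\*s}T_n(t)$ in Chebyshev polynomials. On the Bernstein ellipse $E_\rho$ with $\rho=2$, the bound $\abs{\mathrm{Im}(t)}\le 3/4$ gives $\abs{z_0}^L\le\exp(O(\alpha L^{1/3}))$, hence $M\coloneqq\max_{t\in E_\rho}\abs{\psi_{\*s}(t)}\le L\cdot\exp(O(\alpha L^{1/3}))$. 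The standard coefficient bound for analytic functions on a Bernstein ellipse then gives $\abs{c_n^{\*s}}\le 2M\cdot 2^{-n}$, so truncating at $N=O(L^{1/3})$ makes the tail $\sum_{n\ge N}\abs{c_n^{\*s}T_n(t)}\le 2^{-L^{1/3}/21}$ uniformly on $[-1,1]$. Round each $c_n^{\*s}$ for $n<N$ to a complex grid of spacing $\delta\coloneqq 2^{-L^{1/3}/20}/(2N)$; a direct sum $\sum_{n<N}2\log_2(1+B_n/\delta)$ shows the total number of distinct signatures is $2^S$ with $S\le L^{2/3}/100$ once $\alpha$ is fixed as a small enough positive constant (depending only on $p$). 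Since $S<\log_2\abs{\+S}$, pigeonhole produces distinct $\*x,\*y\in\+S$ with identical signature, for which
\begin{align*}
\sup_{t\in[-1,1]}\abs{\psi_{\*x}(t)-\psi_{\*y}(t)}\le N\delta+2\cdot(\text{tail})\le 2^{-L^{1/3}/20}.
\end{align*}
Translating back from $t$ to $\theta$ and combining with the structural reduction (the $O(L^{1/3})$ factor from $F_w$ is absorbed into the exponent) yields the lemma.

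\emph{Main obstacle.} The technical heart is the quantitative race between the signature count $2^{O(L^{2/3})}$ and the string count $\abs{\+S}=2^{\Omega(L^{2/3})}$: both exponents are $\Theta(L^{2/3})$, and the constants must be carefully controlled via the arc parameter $\alpha$, the Bernstein radius $\rho$, the truncation level $N$, and the rounding precision $\delta$ to force the signature count to be strictly smaller. A secondary difficulty is ensuring the uniform bound on $F_w$ for all $k$-mers, particularly the all-zero $k$-mer $w=0^k$ where a geometric-series factor appears. The key conceptual step enabling the whole argument is the design of $\+S$ so that (a) its size is $2^{\Omega(L^{2/3})}$, and (b) every $k$-mer generating polynomial collapses to a common scalar multiple of $T_{\*s}$, reducing a multi-polynomial pigeonhole to a single-polynomial one.
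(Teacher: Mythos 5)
Your proposal follows the same skeleton as the paper's proof: a family of strings whose 1s are pairwise far apart (so that every nontrivial $k$-mer generating polynomial collapses, up to a factor of modulus $O(k)$, to a single polynomial $\sum_i s_i z_0^i$), followed by a pigeonhole on truncated Chebyshev coefficients against a set of size $2^{\Omega(L^{2/3})}$. Your structural reduction is the same content as the paper's Lemma~\ref{lem:set-properties} (items 1--3), just written multiplicatively. Where you genuinely diverge is the analytic step: the paper expands the \emph{polynomial} $g_{\*x}$ on the real interval $[1-8a,1]$ (via the affine map to $[-1,1]$), pigeonholes its real Chebyshev coefficients there, and then needs \cref{cor:ellipse-to-interval} (Hadamard three circles on $\tE_{a,\rho}$) to propagate the smallness from the real interval out to an ellipse containing the arc $\set{p+qe^{i\theta}}$. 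You instead parametrize the arc itself by a real variable, $\psi_{\*s}(t)=T_{\*s}(p+(1-p)e^{i\alpha L^{-2/3}t})$, and expand this entire function in Chebyshev polynomials, so the pigeonhole delivers closeness on the arc directly and the three-circles step disappears. This is a legitimate and arguably more direct route; the trade-off is that your Chebyshev coefficients are complex (doubling the grid dimension, which you account for) and the object is no longer a polynomial, whereas the paper's version keeps everything real and polynomial and, as the authors note, is structured to generalize to products of sub-arcs.

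One small quantitative slip: with your truncation level $N$ chosen so the tail is only $2^{-L^{1/3}/21}$, the final bound $N\delta+2\cdot(\text{tail})\le 2^{-L^{1/3}/20}$ fails, since $2^{-L^{1/3}/21}>2^{-L^{1/3}/20}>2^{-L^{1/3}/19}$; the tail alone already exceeds your stated target of $2^{-L^{1/3}/19}$ for $T_{\*x}-T_{\*y}$. This is harmless: take $N=L^{1/3}$, which drives the tail down to $2^{-\Omega(L^{1/3})}$ with a much better constant while the signature count stays at $2^{O(L^{2/3}/20)}\ll\abs{\+S}$, so the pigeonhole still has ample slack. Similarly, $\abs{\+S}$ is really $2^{L^{2/3}(1-o(1))}$ rather than $2^{L^{2/3}-O(1)}$, which again does not affect the comparison with the signature count.
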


\begin{proof}[Proof of \cref{lem:complex-main} using \cref{lem:technical-main}]
Let $\beta \ge 3/5$ be a parameter to be decided later. Denote $L\coloneqq n^{\beta}$. We have $k \le n^{1/5} = L^{1/(5\beta)} \le L^{1/3}$, so that the  premise of \cref{lem:technical-main} is satisfied. Therefore, there exist distinct $\*x', \*y' \in \set{0,1}^{L}$ both starting with a run of 0s of length $L^{1/3}-1$, such that for any $k$-mer $w$, it holds that
\begin{align} \label{eqn:subarc-bound}
\sup_{\theta\colon \abs{\theta}\le L^{-2/3}\log^{1/4}L}\abs{P_{w,\*x'}(e^{i\theta})-P_{w,\*y'}(e^{i\theta})}\le 2^{-L^{1/3}\sqrt{\log L}/20}.
\end{align}
Let $\*x = 0^{n-L}\*x'$ and $\*y = 0^{n-L}\*y'$. Since $k\le L^{1/3}$, by construction we have $\*x[j:j+k-1]=\*y[j:j+k-1]$ for all $0\le j\le n-L$. Therefore, any $k$-mer $w$ we have 
\begin{align*}
	& P_{w,\*x}(e^{i\theta}) - P_{w,\*y}(e^{i\theta}) \\
    =& \sum_{j=0}^{n-k}\tp{\idc{\*x[j:j-k+1]=w}-\idc{\*y[j:j-k+1]=w}}(p+qe^{i\theta})^j \\
	=& \tp{p+qe^{i\theta}}^{n-L} \cdot \sum_{j=n-L}^{n-k}\tp{\idc{\*x[j:j-k+1]=w}-\idc{\*y[j:j-k+1]=w}}(p+qe^{i\theta})^{j-(n-L)} \\
	=& \tp{p+qe^{i\theta}}^{n-L}\cdot \sum_{j=0}^{L-k}\tp{\idc{\*x'[j:j-k+1]=w}-\idc{\*y'[j:j-k+1]=w}}(p+qe^{i\theta})^j \\
	=& \tp{p+qe^{i\theta}}^{n-L} \cdot \tp{P_{w,\*x'}(e^{i\theta}) - P_{w,\*y'}(e^{i\theta})}.
\end{align*}
Here $q=1-p$. When $\abs{\theta}$ is large, we can upper bound the supremum as
\begin{align*}
	\sup_{\theta\colon \abs{\theta}> L^{-2/3}\log^{1/4}L}\abs{P_{w,\*x}(e^{i\theta}) - P_{w,\*y}(e^{i\theta})} &= \sup_{\theta\colon \abs{\theta}> L^{-2/3}\log^{1/4}L}\abs{p+qe^{i\theta}}^{n-L} \cdot \abs{P_{w,\*x'}(e^{i\theta}) - P_{w,\*y'}(e^{i\theta})} \\
	&\le \tp{1-c_1L^{-4/3}\log^{1/2}L}^{n-L} \cdot \sup_{\theta\colon \abs{\theta}> L^{-2/3}\log^{1/4}L} \abs{P_{w,\*x'}(e^{i\theta}) - P_{w,\*y'}(e^{i\theta})} \\
	&\le \exp\tp{-c_1(n-L)L^{-4/3}\log^{1/2}L} \cdot (L-k+1) \\
	&\le \exp_2\tp{-c_2n^{1-4\beta/3}\log^{1/2}n}.
\end{align*}
Here the first inequality is due to $\abs{p+qe^{i\theta}} \le 1-c_1a^2$ for some constant $c_1$ (depending on $p$) when $\abs{\theta}\ge a$. When $\abs{\theta}$ is small, this is taken care of by \cref{eqn:subarc-bound}:
\begin{align*}
	\sup_{\theta\colon \abs{\theta}\le L^{-2/3}\log^{1/4}L}\abs{P_{w,\*x}(e^{i\theta}) - P_{w,\*y}(e^{i\theta})} &\le \sup_{\theta\colon \abs{\theta}\le L^{-2/3}\log^{1/4}L} \abs{P_{w,\*x'}(e^{i\theta}) - P_{w,\*y'}(e^{i\theta})} \\
	&\le \exp_2\tp{-L^{1/3}\sqrt{\log L}/20} \\
 &\le \exp_2\tp{-c_3 n^{\beta/3}\log^{1/2}n}.
\end{align*}
Finally, the value of $\beta$ is determined by balancing the two cases. Namely, we let $1-4\beta/3=\beta/3$, or $\beta = 3/5$, which gives the bound $2^{-cn^{1/5}\sqrt{\log n}}$ for both cases. Here $c=\min\set{c_2, c_3}$.
\end{proof}

It remains to prove \cref{lem:technical-main}, which we do after some helpful preliminaries from complex analysis.

\subsection{Some helpful results in complex analysis}
In this section, we introduce some results in complex analysis, which will be useful for proving \cref{lem:technical-main}.

Let $T_d(x)$ denote the $d$\th Chebyshev polynomial, i.e., a degree-$d$ polynomial such that $T_d(\cos\theta)=\cos(d\theta)$. Clearly, $T_d(x) \in [-1,1]$ for $x\in [-1,1]$. If a function $f(z)$ is analytic on $[-1,1]$, it has a converging Chebyshev expansion
\begin{align*}
	f(z) = \sum_{d=0}^{\infty}a_d\cdot T_d(z),\quad z\in[-1,1].
\end{align*} 
Here the $a_d$'s are the \emph{Chebyshev coefficients}, and they can be extracted by the following integral
\begin{align*}
	a_d = \frac{1}{\pi}\int_{0}^{2\pi}f(\cos\theta)\cos(d\theta)\dif\theta,\quad d\ge 1,
\end{align*}
where $\pi$ is replaced by $2\pi$ for $d=0$. This immediately implies a uniform upper bound on Chebyshev coefficients.
\begin{proposition} \label{prop:trivial-chebcoef-bound}
	For all $d\ge 0$, $\abs{a_d} \le 2\sup_{x \in [-1,1]}\abs{f(x)}$.
\end{proposition}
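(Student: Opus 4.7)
The proposition is a direct estimate on the integral representation of the Chebyshev coefficients that was just displayed immediately before its statement, so the plan is essentially a one-line triangle-inequality argument, split into the $d=0$ and $d\ge 1$ cases because of the different normalizing constants.

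My plan is to start from the integral formulas
\[
a_d = \frac{1}{\pi}\int_{0}^{2\pi} f(\cos\theta)\cos(d\theta)\,\dif\theta \quad (d\ge 1),\qquad
a_0 = \frac{1}{2\pi}\int_{0}^{2\pi} f(\cos\theta)\,\dif\theta,
\]
pull the absolute value inside each integral, and use the pointwise bounds $|\cos(d\theta)|\le 1$ and $|f(\cos\theta)|\le M$, where $M\coloneqq \sup_{x\in[-1,1]}|f(x)|$ (the substitution $x=\cos\theta$ keeps us in $[-1,1]$, so the sup is attained over exactly the set over which we are integrating).

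For $d\ge 1$ this gives $|a_d|\le \frac{1}{\pi}\cdot M\cdot 2\pi = 2M$, and for $d=0$ it gives $|a_0|\le \frac{1}{2\pi}\cdot M\cdot 2\pi = M\le 2M$, proving the claim uniformly in $d\ge 0$. I do not anticipate any real obstacle: the only subtlety worth mentioning is that the integrand is Riemann integrable (it is continuous since $f$ is analytic on the compact set $[-1,1]$, hence so is $\theta\mapsto f(\cos\theta)\cos(d\theta)$), which justifies the triangle inequality for integrals. The bound $2M$ (rather than the sharper $M$) is used uniformly so that a single constant covers both cases, matching exactly what the rest of the argument in \cref{lem:technical-main} will need when summing over $d$.
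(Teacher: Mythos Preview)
Your proof is correct and is exactly what the paper has in mind: the statement is asserted as an immediate consequence of the integral formula for $a_d$ given just before it, and your triangle-inequality estimate on that integral is precisely the intended justification.
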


In fact, if $f$ is analytically continuable to a larger region, much better bounds can be obtained. For that we need the notion of Bernstein ellipse.
\begin{definition}[Bernstein Ellipse]
	Given $\rho \ge 1$, the boundary of the Bernstein Ellipse is defined as
	\begin{align*}
		\partial E_{\rho} \coloneqq \Set{\frac{u+u^{-1}}{2} \colon u = \rho e^{i\theta}, \theta \in [0,2\pi) }.
	\end{align*}
\end{definition}
The Bernstein Ellipse $E_{\rho}$ has the foci at $\pm 1$ with the major and minor semi-axes given by $(\rho+\rho^{-1})/2$ and $(\rho-\rho^{-1})/2$, respectively. When $\rho = 1$, $E_{\rho}$ coincides with the interval $[-1,1]$ on the real line. For our purpose, we will also be working with affine transformations of $E_{\rho}$. More precisely, for $a \in [0,1/8]$ we denote by $\tE_{a,\rho}$ (the interior of) the  following ellipse
\begin{align*}
	\Set{(1-4a)+4a\cdot \frac{u+u^{-1}}{2} \colon u = \rho e^{i\theta}, \theta \in [0,2\pi)}.
\end{align*}
Thus, $\tE_{a,\rho}$ can be equivalently defined as 
\begin{align*}
	\Set{z \colon \abs{z-(1-8a)} + \abs{z-1} \le 8a + 4a(\rho-1)^2/\rho}.
\end{align*}

Below are some useful properties of $\tE_{a,\rho}$.

\begin{proposition} \label{prop:ellipse-bound}
	The following statements hold.
	\begin{enumerate}
		\item Let $z \in \partial\tE_{a,\rho}$. Then $\abs{z}\le 1+2a(\rho-1)^2/\rho$.
		\item $\tE_{a,\rho}$ contains a disk centered at 1 with radius $2a(\rho-1)^2/\rho$.
	\end{enumerate}
\end{proposition}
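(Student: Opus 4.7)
The plan is to establish the two bounds separately, using for part 1 the parametric description $z = (1-4a) + 2a(\rho e^{i\theta} + \rho^{-1}e^{-i\theta})$ of $\partial\tilde{E}_{a,\rho}$, and for part 2 the focus-sum characterization $|z-(1-8a)| + |z-1| \le 8a + 4a(\rho-1)^2/\rho$ already spelled out in the paper. A useful algebraic identity throughout will be
\[
\rho + \rho^{-1} - 2 = (\rho-1)^2/\rho,
\]
which converts between the two natural scales for the ellipse.

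For part 1, I would write $z = (1-4a) + 2a(\rho+\rho^{-1})\cos\theta + 2ai(\rho-\rho^{-1})\sin\theta$ and expand
\[
|z|^2 = (1-4a)^2 + 4a(1-4a)(\rho+\rho^{-1})\cos\theta + 4a^2(\rho+\rho^{-1})^2 - 16a^2\sin^2\theta,
\]
where the last two terms come from combining $4a^2[(\rho+\rho^{-1})^2\cos^2\theta + (\rho-\rho^{-1})^2\sin^2\theta]$ using $(\rho+\rho^{-1})^2 - (\rho-\rho^{-1})^2 = 4$. Differentiating in $\theta$ produces $-\sin\theta \cdot \bigl[4a(1-4a)(\rho+\rho^{-1}) + 32a^2\cos\theta\bigr]$. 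Since $\rho+\rho^{-1}\ge 2$, the bracket is minimized at $\cos\theta=-1$ and equals at least $8a(1-4a) - 32a^2 = 8a(1-8a) \ge 0$ under the hypothesis $a \in [0,1/8]$. Hence the extrema occur only at $\theta\in\{0,\pi\}$, and the maximum is at $\theta=0$, where $|z|^2 = \bigl[(1-4a)+2a(\rho+\rho^{-1})\bigr]^2 = \bigl(1 + 2a(\rho-1)^2/\rho\bigr)^2$ by the identity above, giving the claim.

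For part 2, I would use the focus-sum description directly. Setting $r \coloneqq 2a(\rho-1)^2/\rho$, any $z$ with $|z-1|\le r$ satisfies, by the triangle inequality applied to the segment from $1-8a$ to $1$,
\[
|z-(1-8a)| + |z-1| \le \bigl(|z-1| + 8a\bigr) + |z-1| \le 2r + 8a = 8a + 4a(\rho-1)^2/\rho,
\]
which is exactly the defining inequality for $\tilde{E}_{a,\rho}$; hence the disk lies in the ellipse.

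There is no real obstacle here; the only subtle point is verifying that the bracket in the derivative computation is non-negative on $[0,1/8]$, which follows from the crude bound $8a(1-8a)\ge 0$ and is precisely where the hypothesis $a\le 1/8$ is used. Everything else reduces to bookkeeping and the single identity $\rho+\rho^{-1}-2 = (\rho-1)^2/\rho$.
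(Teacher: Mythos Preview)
Your proof is correct. Item~2 is identical to the paper's argument. For Item~1, the paper takes a slightly shorter path: after arriving at the same expansion
\[
|z|^2 = (1-4a)^2 + 4a(1-4a)(\rho+\rho^{-1})\cos\theta + 4a^2(\rho+\rho^{-1})^2 - 16a^2\sin^2\theta,
\]
it simply bounds $\cos\theta \le 1$ in the linear term (valid since $1-4a\ge 0$) and drops the nonpositive term $-16a^2\sin^2\theta$, yielding $|z|^2 \le \bigl((1-4a)+2a(\rho+\rho^{-1})\bigr)^2$ directly. Your derivative analysis reaches the same maximum at $\theta=0$ but is a bit more work, and it invokes the hypothesis $a\le 1/8$ to force the bracket nonnegative, whereas the paper's argument only needs $a\le 1/4$. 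Both are fine; the paper's is just the one-line version.
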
 
\begin{proof}
	\textit{Item (1):}
	
	Writing $z=(1-4a)+4a(u+u^{-1})/2$ where $u=\rho e^{i\theta}$, we have
	\begin{align*}
		\abs{z}^2 &= \tp{1-4a+2a\rho\cos\theta+2a\rho^{-1}\cos\theta}^2+\tp{2a\rho\sin\theta-2a\rho^{-1}\sin\theta}^2 \\
		&= (1-4a)^2 + 2(1-4a)\cdot 2a(\rho+\rho^{-1})\cos\theta + \tp{2a(\rho+\rho^{-1})}^2\cos^2\theta + \tp{2a(\rho-\rho^{-1})}^2\sin^2\theta \\
		&\le (1-4a)^2 + 2(1-4a)\cdot 2a(\rho+\rho^{-1}) + \tp{2a(\rho+\rho^{-1})}^2 - 4a^2\sin^2\theta\tp{(\rho+\rho^{-1})^2-(\rho-\rho^{-1})^2} \\
		&= \tp{(1-4a)+2a(\rho+\rho^{-1})}^2 - 16a^2\sin^2\theta \\
		&\le \tp{1+2a(\rho+\rho^{-1}-2)}^2.
	\end{align*}
	Therefore $\abs{z} \le 1+2a(\rho+\rho^{-1}-2) = 1+2a(\rho-1)^2/\rho$.
	
	\textit{Item (2):}
	Let $z$ be such that $\abs{z-1}\le 2a(\rho-1)^2/\rho$. We have
	\begin{align*}
		\abs{z-(1-8a)}+\abs{z-1} \le 8a+2\abs{z-1} \le 8a+4a(\rho-1)^2/\rho.
	\end{align*}
	This implies $z \in \tE_{a,\rho}$.
\end{proof}

The following result shows an exponential convergence rate of the Chebyshev expansion.
\begin{theorem}[Theorem 8.1,~\cite{trefethen2012approximation}] \label{thm:chebcoef-bound}
	Let a function $f$ analytic on $[-1,1]$ be analytically continuable to the open Bernstein Ellipse $E_{\rho}$, where it satisfies $\abs{f(z)} \le M$ for some $M$. Then its Chebyshev coefficients satisfy 
	\begin{align*}
		\abs{a_0}\le M,\textup{ and }\abs{a_k} \le 2M\rho^{-k}, k\ge 1.
	\end{align*}
\end{theorem}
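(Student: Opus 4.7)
The plan is to convert the real integral defining each Chebyshev coefficient into a complex contour integral around the unit circle via the Joukowski substitution, and then deform that contour inside the annulus $\Set{1/\rho < \abs{u} < \rho}$ whose image under $u \mapsto (u+u^{-1})/2$ uniformizes the Bernstein ellipse. The case $d=0$ is immediate: the formula $a_0 = \frac{1}{2\pi}\int_0^{2\pi} f(\cos\theta)\,d\theta$ from the excerpt, together with the bound $\abs{f} \le M$ on $E_\rho \supset [-1,1]$, yields $\abs{a_0} \le M$. For $d \ge 1$, I would set $u = e^{i\theta}$ so that $\cos\theta = (u+u^{-1})/2$, $\cos(d\theta) = (u^d + u^{-d})/2$, and $d\theta = du/(iu)$, and define $g(u) \coloneqq f\!\tp{\tfrac{u+u^{-1}}{2}}$. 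The excerpt's coefficient formula then rewrites as the contour integral
\begin{align*}
a_d = \frac{1}{2\pi i}\oint_{\abs{u}=1} g(u)\tp{u^{d-1} + u^{-d-1}}\,du.
\end{align*}

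Next, I would verify that $g$ is analytic on the open annulus $\mathcal{A}_\rho \coloneqq \Set{1/\rho < \abs{u} < \rho}$ and bounded there by $M$. The Joukowski map $J(u) = (u+u^{-1})/2$ sends each of $\Set{1 < \abs{u} < \rho}$ and $\Set{1/\rho < \abs{u} < 1}$ biholomorphically onto $E_\rho \setminus [-1,1]$, and on the unit circle itself $J$ takes values in $[-1,1] \subset E_\rho$; since $f$ is analytic on $E_\rho$ by assumption, $g = f \circ J$ is a composition of analytic functions on all of $\mathcal{A}_\rho$, and the bound $\abs{g} \le M$ on $\mathcal{A}_\rho$ follows from $J(\mathcal{A}_\rho) \subseteq E_\rho$. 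Fix any $r$ with $1 < r < \rho$. Splitting the integrand into its two monomial pieces and shifting each contour in the direction that shrinks its factor: the $u^{d-1}$ piece deforms inward to $\abs{u} = 1/r$, contributing at most $(2\pi/r) \cdot M \cdot r^{-(d-1)} = 2\pi M r^{-d}$; the $u^{-d-1}$ piece deforms outward to $\abs{u} = r$, contributing at most $2\pi r \cdot M \cdot r^{-d-1} = 2\pi M r^{-d}$. Summing the two and dividing by $2\pi$ gives $\abs{a_d} \le 2Mr^{-d}$, and letting $r \to \rho^-$ finishes the proof.

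The proof is largely routine complex analysis; what requires the most attention is getting the factor of $2$ in $\abs{a_d} \le 2M\rho^{-d}$ to come out on the nose rather than $4$ or $1$. This factor is structural: it arises from the two symmetric terms $u^{d-1}$ and $u^{-d-1}$ produced by writing $\cos(d\theta)$ as $(u^d + u^{-d})/2$, each of which contributes equally after contour deformation to the opposite boundary of $\mathcal{A}_\rho$. An alternative approach would be to expand $f$ directly as a Laurent series in the Joukowski variable $u$ and read off Chebyshev coefficients via uniqueness, but the contour-deformation route produces the quantitative bound transparently and avoids having to separately invoke convergence of the Laurent expansion on $\mathcal{A}_\rho$.
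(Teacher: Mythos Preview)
Your proof is correct and follows essentially the same approach as the paper: both pass to the Joukowski variable, rewrite $a_d$ as a contour integral over the unit circle, and deform the contour inside the annulus $\Set{1/\rho < \abs{u} < \rho}$ using analyticity of $g = f\circ J$ there. The only cosmetic difference is how the factor of $2$ is extracted: the paper first uses the symmetry $g(u)=g(u^{-1})$ to show the $u^{d-1}$ and $u^{-d-1}$ pieces give equal integrals, collapses them into a single term $\frac{1}{\pi i}\oint g(u)u^{-d-1}\,du$, and then deforms that one contour outward; you instead keep the two pieces separate and deform them to opposite boundaries of the annulus, each contributing $M r^{-d}$.
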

\begin{proof}
	The Chebyshev coefficients of $f$ is given by
	\begin{align*}
		a_k = \frac{1}{\pi}\int_{0}^{2\pi}f\tp{\cos\theta}T_k\tp{\cos\theta}\dif\theta = \frac{1}{\pi}\int_{0}^{2\pi}f\tp{\cos\theta}\cos\tp{k\theta}\dif\theta,
	\end{align*}
	with $\pi$ replaced by $2\pi$ for $k=0$. Letting $z = e^{i\theta}$, one could write $\cos\theta = (z+z^{-1})/2$, $\dif\theta = (iz)^{-1}\dif z$, and hence
	\begin{align*}
		a_k = \frac{1}{\pi i}\int_{\abs{z}=1}f\tp{\frac{z+z^{-1}}{2}}\frac{z^k+z^{-k}}{2}\cdot z^{-1}\dif z.
	\end{align*}
	Denote $F(z)\coloneqq f((z+z^{-1})/2)=F(z^{-1})$. Note that we can substitute $z^{-1}$ for $z$ and obtain
	\begin{align*}
		\frac{1}{\pi i}\int_{\abs{z}=1}F(z)z^{k-1}\dif z = -\frac{1}{\pi i}\int_{\abs{z}=1}F(z^{-1})z^{-(k-1)}\dif z^{-1}	= \frac{1}{\pi i}\int_{\abs{z}=1}F(z)z^{-k-1}\dif z.
	\end{align*}
	Therefore we arrived at the expression
	\begin{align*}
		a_k = \frac{1}{\pi i}\int_{|z|=1}F(z)z^{-k-1}\dif z.
	\end{align*}
	Since $f(z)$ is analytic in the open Bernstein Ellipse $E_{\rho}$, we can conclude that $F(z)$ is analytic in the annulus $\rho^{-1}<\abs{z}<\rho$. That means, for any $\rho_0 \in (\rho^{-1},\rho)$ we have by Cauchy's integral theorem (\cf \cite[\S 3, Theorem 5.1]{lang2013complex}) that
	\begin{align*}
		a_k = \frac{1}{\pi i}\int_{\abs{z}=\rho_0}F(z)z^{-k-1}\dif z.
	\end{align*}
	Now we have 
	\begin{align*}
		\abs{a_k} \le \frac{1}{\pi} \cdot \int_{\abs{z}=\rho_0} \abs{F(z)} \cdot \abs{z}^{-k-1} \abs{\dif z} \le \frac{1}{\pi} \cdot 2\pi \rho_0 M \cdot \rho_0^{-k-1} = 2M\rho_0^{-k}.
	\end{align*}
	Finally, since the bound holds for any $\rho_0<\rho$, it also holds for $\rho_0=\rho$.
\end{proof}

We will also make use of the following theorem.
\begin{theorem}[Hadamard Three Circles Theorem] \label{thm:hadamard3circle}
	Suppose $f$ is analytic inside and on $\set{z\in\mathbb{C}\colon r_1\le \abs{z}\le r_2}$. For $r\in [r_1, r_2]$, let $M(r)\coloneqq \sup_{\abs{z}=r}\abs{f(z)}$. Then
	\begin{align*}
		M(r)^{\log(r_2/r_1)} \le M(r_1)^{\log(r_2/r)}M(r_2)^{\log(r/r_1)}.
	\end{align*}
\end{theorem}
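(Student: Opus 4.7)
The plan is to reduce the stated inequality to a convexity statement for $\log M(e^t)$ in $t$, and then establish the reduced statement by a single application of the maximum principle to an auxiliary function. Taking logarithms on both sides, the claim is equivalent to
\[
\log M(r)\cdot \log(r_2/r_1) \;\le\; \log(r_2/r)\cdot \log M(r_1) + \log(r/r_1)\cdot \log M(r_2),
\]
which says that $t \mapsto \log M(e^t)$ is convex on $[\log r_1,\log r_2]$. So it suffices, for each fixed $r\in(r_1,r_2)$, to produce an affine-in-$\log|z|$ majorant of $\log|f(z)|$ that is tight at the two boundary circles.

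Concretely, I would introduce the real parameter
\[
\alpha \;\coloneqq\; \frac{\log M(r_1) - \log M(r_2)}{\log(r_2/r_1)},
\]
chosen so that $r_1^{\alpha}M(r_1) = r_2^{\alpha}M(r_2)$, and work with the single-valued, nonnegative function $h(z) \coloneqq |z|^{\alpha}|f(z)|$ on the closed annulus $r_1 \le |z| \le r_2$. Since the annulus avoids the origin, $\alpha\log|z|$ is harmonic there, while $\log|f(z)|$ is subharmonic because $f$ is holomorphic (harmonic wherever $f \neq 0$, and equal to $-\infty$ at zeros of $f$). Hence $\log h$ is subharmonic on the compact annulus, and the maximum principle for subharmonic functions gives
\[
\sup_{|z|=r} h(z) \;\le\; \max\Bigl(\sup_{|z|=r_1} h(z),\ \sup_{|z|=r_2} h(z)\Bigr) \;=\; \max\bigl(r_1^{\alpha}M(r_1),\ r_2^{\alpha}M(r_2)\bigr).
\]
By the choice of $\alpha$ the two quantities on the right coincide, so $r^{\alpha}M(r) \le r_1^{\alpha}M(r_1)$; substituting the value of $\alpha$ into $(r_1/r)^{\alpha}$ rewrites this as $M(r)\le M(r_1)^{\log(r_2/r)/\log(r_2/r_1)}\,M(r_2)^{\log(r/r_1)/\log(r_2/r_1)}$, and raising both sides to the power $\log(r_2/r_1)$ yields the theorem.

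The one technical point to handle is that $z^{\alpha}$ is multivalued for nonintegral real $\alpha$, which prevents a direct application of the ordinary holomorphic maximum modulus principle to $z^{\alpha}f(z)$. My proposal sidesteps this by using the single-valued real function $|z|^{\alpha}|f(z)|$ and invoking the subharmonic maximum principle. A routine alternative that avoids subharmonic theory altogether is to first establish the inequality for rational exponents $\alpha = p/q$ (with integer $q>0$) by applying the ordinary maximum modulus principle to the holomorphic function $z^{p}f(z)^{q}$ on the annulus, and then pass to the limit in $p/q \to \alpha$ using continuity of $r\mapsto M(r)$ and of $M(r_i)^{\alpha}$ in $\alpha$; the final algebraic identification of the exponents with $\log(r_2/r)/\log(r_2/r_1)$ and $\log(r/r_1)/\log(r_2/r_1)$ is identical in both approaches.
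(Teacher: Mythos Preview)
Your argument is the standard, correct proof of Hadamard's three circles theorem: choose $\alpha$ so that $r_1^{\alpha}M(r_1)=r_2^{\alpha}M(r_2)$, apply the subharmonic maximum principle to $\log\bigl(|z|^{\alpha}|f(z)|\bigr)$ on the closed annulus, and unwind the algebra. The only edge case you might flag explicitly is $M(r_1)=0$ or $M(r_2)=0$, which forces $f\equiv 0$ and makes the inequality trivial; otherwise $\alpha$ is well-defined and everything goes through.

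As for comparison with the paper: there is nothing to compare. The paper does not prove \cref{thm:hadamard3circle}; it simply quotes it as a classical result and immediately uses it to derive \cref{cor:ellipse-to-interval}. So you have supplied a proof where the paper supplied none, and the one you supplied is the textbook one.
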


\begin{corollary} \label{cor:ellipse-to-interval}
	Suppose $f(z)=\sum_{j=0}^{n-1}c_jz^j$ where $\abs{c_j}\le 1$. Then
	\begin{align*}
		\sup_{z\in\partial\tE_{a,2}}\abs{f(z)} \le \exp\tp{5an/2} \cdot \tp{\sup_{z\in[1-8a,1]}\abs{f(z)} }^{1/2}.
	\end{align*}
\end{corollary}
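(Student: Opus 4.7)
My plan is to apply the Hadamard Three Circles Theorem (\cref{thm:hadamard3circle}) after pulling the confocal family $\set{\partial\tilde E_{a,\rho}}$ back to concentric circles via a Joukowski-type substitution. Concretely, I would introduce the Laurent polynomial $G(w)\coloneqq f((1-4a)+2a(w+w^{-1}))$, which is analytic on $\mathbb{C}\setminus\set{0}$. This substitution sends the circle $\abs w=r$ (for $r>1$) bijectively onto $\partial\tilde E_{a,r}$, and it sends $\abs w=1$ onto the real interval $[1-8a,1]$ (traced twice). In particular, choosing the three radii $1,2,4$, the three quantities appearing in \cref{thm:hadamard3circle} become exactly $M_1\coloneqq\sup_{z\in[1-8a,1]}\abs{f(z)}$, $M_2\coloneqq\sup_{z\in\partial\tilde E_{a,2}}\abs{f(z)}$, and $M_4\coloneqq\sup_{z\in\partial\tilde E_{a,4}}\abs{f(z)}$.

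Since $\log 4=2\log 2$, \cref{thm:hadamard3circle} applied to $G$ on the annulus $\set{1\le\abs w\le 4}$ collapses to the symmetric form $M_2^2\le M_1\cdot M_4$. I would then bound $M_4$ by the trivial coefficient estimate: by \cref{prop:ellipse-bound}(1), every $z\in\partial\tilde E_{a,4}$ satisfies $\abs z\le 1+9a/2$, so $\abs{f(z)}\le\sum_{j=0}^{n-1}\abs z^j\le n(1+9a/2)^{n-1}\le n\exp\tp{9a(n-1)/2}$. Substituting into the three-circles bound and taking square roots yields
\begin{align*}
M_2\le\sqrt n\cdot\exp\tp{9a(n-1)/4}\cdot M_1^{1/2}.
\end{align*}

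To reach the form stated in the corollary, I would absorb the polynomial prefactor $\sqrt n$ into the exponential via $\sqrt n\le\exp(an/4)$, which holds whenever $an\ge 2\ln n$. This regime is comfortably met in the downstream application in \cref{lem:technical-main}, where $a\asymp n^{-2/3}$ and hence $an\asymp n^{1/3}\gg\ln n$; under this absorption the combined exponent is at most $an/4+9a(n-1)/4\le 5an/2$, matching the statement. The only point requiring care is verifying the Joukowski correspondence $\abs w=r\leftrightarrow\partial\tilde E_{a,r}$ (including the $2$-to-$1$ wrapping at $r=1$) and the analyticity of $G$ on the closed annulus $\set{1\le\abs w\le 4}$; once these are in place, \cref{thm:hadamard3circle} together with \cref{prop:ellipse-bound} does essentially all of the work, and no additional complex-analytic machinery is needed.
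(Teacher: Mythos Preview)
Your proposal is correct and follows essentially the same route as the paper: define $g(w)=f((1-4a)+2a(w+w^{-1}))$, apply \cref{thm:hadamard3circle} on the annulus with radii $1,2,4$, and bound the outer supremum via \cref{prop:ellipse-bound}(1) together with the trivial coefficient estimate. The only cosmetic difference is that the paper writes the absorption $n\cdot(1+9a/2)^n\le\exp(5an)$ in one stroke (implicitly using the same $an\gtrsim\ln n$ regime you made explicit), whereas you track the $\sqrt n$ factor separately before absorbing it.
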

\begin{proof}
	Let $\rho_1=1, \rho=2, \rho_2=2^2$. Let $g(z)\coloneqq f(u)$ where $u=(1-4a)+4a(z+z^{-1})/2$. Since $f$ is analytic on and inside $\tE_{a,\rho_2}$, $g$ is analytic inside the centered disk with radius $\rho_2$. Applying the Hadamard Three Circles Theorem to $g$ gives
	\begin{align*}
		\sup_{z\in \partial\tE_{a,\rho}}\abs{f(z)} \le \tp{\sup_{z\in \partial\tE_{a,\rho_1}}\abs{f(z)}}^{1/2} \tp{\sup_{z\in \partial\tE_{a,\rho_2}}\abs{f(z)}}^{1/2}.
	\end{align*}
	We note that $\tE_{a,\rho_1}$ coincides with the interval $[1-8a,1]$ on the real line. For $z \in \partial\tE_{a,\rho_2}$, \cref{prop:ellipse-bound} implies $\abs{f(z)}\le n\cdot(1+2a(4-1)^2/4)^n \le \exp(5an)$. Therefore
	\begin{align*}
		\sup_{z\in \partial\tE_{a,2}}\abs{f(z)} \le \exp\tp{5an/2}\cdot \tp{\sup_{z\in[1-8a,1]}\abs{f(z)}}^{1/2}.
	\end{align*}
\end{proof}

\subsection{Proof of \cref{lem:technical-main}: A Counting Argument}

We prove \cref{lem:technical-main} in this section. 

We first prove a technical lemma lower bounding the number of binary strings in which all 1s are far away from each other.
\begin{lemma}
    Let $S_{n,r} \subseteq \set{0,1}^{n}$ be the collection of all $n$-bit strings with the property that any two 1's are separated by at least $r$ many 0's. Then $\abs{S_{n,r}}\ge (\sqrt{r+1})^{n/r-1}$.
\end{lemma}
\begin{proof}
For ease of notation we fix $r$ and denote $f(n) \coloneqq \abs{S_{n,r}}$. We observe that $f$ satisfies the following recurrence relation
\begin{align*}
    f(n) = \begin{cases}
        n+1, & \text{ for } 0\le n\le r \\
        f(n-1)+f(n-r-1), & \text{ for } n \ge r+1
    \end{cases}.
\end{align*}
We prove by induction that $f(n)\ge (\sqrt{r+1})^{n/r-1}$. The base case is trivial since $(\sqrt{r+1})^{n/r-1}\le 1$ when $n\le r$.

Now suppose $f(k) \ge (\sqrt{r+1})^{k/r-1}$ for $k\le n-1$. This gives, for $k=n$, the following bound
\begin{align*}
    f(n) &= f(n-1) + f(n-r-1) \\
    &\ge (\sqrt{r+1})^{(n-1)/r-1} + (\sqrt{r+1})^{(n-r-1)/r-1} \\
    &= (\sqrt{r+1})^{(n-1)/r-1}\tp{1+\frac{1}{\sqrt{r+1}}}.
\end{align*}
Since by the AM-GM inequality we have
\begin{align*}
    r+\frac{r}{\sqrt{r+1}} = r-\frac{1}{\sqrt{r+1}}+\sqrt{r+1} \ge \underbrace{1+1+\dots+1}_{r-1\textrm{ 1s}} + \sqrt{r+1} \ge r(\sqrt{r+1})^{1/r},
\end{align*}
or equivalently $1+1/\sqrt{r+1}\ge (\sqrt{r+1})^{1/r}$, we obtain
\begin{align*}
    f(n) \ge (\sqrt{r+1})^{(n-1)/r-1}\cdot (\sqrt{r+1})^{1/r} = (\sqrt{r+1})^{n/r-1}.
\end{align*}
This completes the inductive step, and hence $\abs{S_{n,r}}\ge (\sqrt{r+1})^{n/r-1}$ for all $n, r\in \mathbb{N}$.
\end{proof}

In the following, we fix $k\coloneqq L^{1/3}$, and let $S\coloneqq 0^k \circ S_{L-k,k-1}$. The proof will focus on binary strings in the set $S$. We have $\abs{S}\ge (\sqrt{(k-1)+1})^{(L-k)/(k-1)-1} \ge 2^{(L^{2/3}\log_2 L)/6}$.

Below we characterize some properties of $k$-mer generating polynomials of strings in $S$.
\begin{lemma} \label{lem:set-properties}
Let $S$ be a set of strings defined as above. For $j=1,2,\dots,k$, denote by $\*e_j$ the string with a single ``1'' located at index $j-1$ (indices begin with 0). The following properties hold.
\begin{enumerate}
 \item For any $k$-mer $w \notin \set{0^k, \*e_1, \dots, \*e_k}$, $P_{w,\*x}(z)$ is the zero polynomial.
 \item For any $\*x \in S$ and $1\le j<k$, $P_{\*e_j,\*x}(z) = (p+(1-p)z)\cdot P_{\*e_{j+1},\*x}(z)$.
 \item For any $\*x, \*y \in S$ and $\abs{z}\le 1$, $\abs{P_{0^k, \*x}(z)-P_{0^k, \*y}(z)} \le k\cdot\abs{P_{\*e_k,\*x}(z)-P_{\*e_k,\*y}(z)}$ .
\end{enumerate}
\end{lemma}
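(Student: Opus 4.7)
The plan hinges on one structural observation about $S$: since each copy of $\*s_0 = 0^{L^{1/3}-1}$ separates consecutive $b_i$'s, any two $1$-bits in any $\*x\in S$ are separated by at least $L^{1/3}-1\ge k-1$ zeros. Consequently every length-$k$ window of $\*x$ contains at most one $1$. Moreover, by construction the first $L^{1/3}-1$ bits and the last $L^{1/3}$ bits of $\*x$ are $0$ (the leading $\*s_0$, and the trailing $\*s_0$ followed by the final $0$). These two facts will drive all three parts.

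Part 1 is immediate: $\set{0^k,\*e_1,\dots,\*e_k}$ is exactly the set of $k$-mers with at most one $1$, so any other $w$ has $\ge 2$ ones and therefore cannot equal $\*x[j\colon j+k-1]$ for any $j$ by the separation property; hence every $K_{w,\*x}[\ell]$ vanishes and $P_{w,\*x}$ is the zero polynomial. For Part 2, I would set $z_0 = p+(1-p)z$ and use $P_{w,\*x}(z)=\sum_{i=0}^{n-k}\idc{\*x[i\colon i+k-1]=w}\,z_0^i$. Matching the coefficients of $z_0^i$ on both sides of the claimed identity, the LHS contributes $\idc{\*x[i\colon i+k-1]=\*e_j}$ while the RHS, after shifting the summation index by one, contributes $\idc{\*x[i-1\colon i+k-2]=\*e_{j+1}}$. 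Both indicators demand a single $1$ at the same absolute position $i+j-1$; the requirement that the remaining $k-1$ window positions be $0$ is then automatic by the separation property. Only the boundary terms $i=0$ on the LHS and $i=n-k+1$ on the RHS must be handled separately, but in both cases the required $1$ would have to lie inside one of the two all-zero boundary blocks, which is impossible when $1\le j<k\le L^{1/3}$.

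For Part 3, the identity $\sum_{w\in\set{0,1}^k}\idc{\*x[i\colon i+k-1]=w}=1$ combined with Part 1 yields
\begin{align*}
\idc{\*x[i\colon i+k-1]=0^k} \;=\; 1 \;-\; \sum_{j=1}^{k}\idc{\*x[i\colon i+k-1]=\*e_j}.
\end{align*}
Summing over $i$ with weight $z_0^i$ and iterating Part 2 to write $P_{\*e_j,\*x}(z)=z_0^{k-j}P_{\*e_k,\*x}(z)$ for $1\le j\le k$, I obtain
\begin{align*}
P_{0^k,\*x}(z) \;=\; \sum_{i=0}^{n-k} z_0^i \;-\; \left(\sum_{j=0}^{k-1}z_0^j\right) P_{\*e_k,\*x}(z).
\end{align*}
The geometric prefix is independent of $\*x$ and therefore cancels in the difference $P_{0^k,\*x}(z)-P_{0^k,\*y}(z)$, leaving exactly $\bigl(\sum_{j=0}^{k-1}z_0^j\bigr)\bigl(P_{\*e_k,\*y}(z)-P_{\*e_k,\*x}(z)\bigr)$. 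Since $|z_0|\le p+(1-p)|z|\le 1$ whenever $|z|\le 1$, the geometric factor has modulus at most $k$, yielding the claimed bound. The main obstacle is the careful bookkeeping in Part 2 around the two boundary summation indices; once that step is in hand, Parts 1 and 3 follow almost mechanically from the separation property of $S$ together with the telescoping identity used above.
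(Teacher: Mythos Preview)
Your proposal is correct and follows essentially the same route as the paper: the separation property for Part~1, the index-shift identity $\idc{\*x[i:i+k-1]=\*e_j}=\idc{\*x[i-1:i+k-2]=\*e_{j+1}}$ with boundary terms killed by the all-zero prefix/suffix for Part~2, and the partition-of-unity $\sum_w P_{w,\*x}(z)=\sum_i z_0^i$ combined with iterated Part~2 for Part~3. The only cosmetic difference is that you factor out the geometric sum $\sum_{j=0}^{k-1}z_0^j$ explicitly before differencing, whereas the paper takes the difference first and then applies the triangle inequality termwise; both yield the same bound.
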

\begin{proof}
\textit{Item 1:} By definition of $S$, $\*x[j:j+k-1]$ contains at most one ``1'' for any string $\*x \in S$. Therefore, if $w$ contains at least two ``1''s, then for any $0\le \ell<L-k$,  
\begin{align*}
    K_{w,\*x}[\ell] = \sum_{j=0}^{L-k}\binom{j}{\ell}p^{\ell}(1-p)^{j-\ell}\idc{\*x[j:j+k-1]=w} = 0.
\end{align*}
This means all the coefficients of $P_{w,\*x}(z)$ is zero, and hence $P_{w,\*x}(z)$ is the zero polynomial.

\textit{Item 2:} Since any two consecutive ``1''s in $\*x \in S$ are separated by at least $k-1$ ``0''s, $\*x[i:i+k-1]=\*e_j$ if and only if $\*x[i-1:i+k-2]=\*e_{j+1}$. We thus have
\begin{align*}
    P_{\*e_j,\*x}(z) &= \sum_{i=0}^{L-k} \idc{\*x[i:i+k-1]=\*e_j}\cdot (p+(1-p)z)^i \\
    &= \sum_{i=1}^{L-k} \idc{\*x[i-1:i+k-2]=\*e_{j+1}}\cdot (p+(1-p)z)^i \\
    &= (p+(1-p)z)\cdot \sum_{i=0}^{L-k-1} \idc{\*x[i:i+k-1]=\*e_{j+1}}\cdot (p+(1-p)z)^i \\
    &= (p+(1-p)z)\cdot P_{\*e_{j+1},\*x}(z).
\end{align*}
We have used the fact that for $1\le j< k$, $\idc{\*x[0:k-1]=\*e_j}=\idc{\*x[L-k:L-1]=\*e_{j+1}}=0$.

\textit{Item 3:} We observe that $\*x[i:i+k-1] \in \set{0^k, \*e_1, \dots, \*e_k}$. That implies
\begin{align*}
    \sum_{w \in \set{0^k, \*e_1, \dots, \*e_k}}P_{w,\*x}(z) &= \sum_{w \in \set{0^k, \*e_1, \dots, \*e_k}}\sum_{i=0}^{L-k} \idc{\*x[i:i+k-1]=w}\cdot (p+(1-p)z)^i \\
    &= \sum_{i=0}^{L-k} \tp{\sum_{w \in \set{0^k, \*e_1, \dots, \*e_k}} \idc{\*x[i:i+k-1]=w}}(p+(1-p)z)^i \\
    &= \sum_{i=0}^{L-k}(p+(1-p)z)^i.
\end{align*}
Note the the right-hand-side is independent of $\*x$. Therefore
\begin{align*}
    \abs{P_{0^k,\*x}(z)-P_{0^k,\*y}(z)} &= \abs{\sum_{w \in \set{\*e_1, \dots, \*e_k}}P_{w,\*x}(z) - \sum_{w \in \set{\*e_1, \dots, \*e_k}}P_{w,\*y}(z)} \\
    &\le \sum_{j=1}^{k}\abs{P_{\*e_j,\*x}(z)-P_{\*e_j,\*y}(z)} \\
    &= \sum_{j=1}^{k}\abs{\tp{p+(1-p)z}^{k-j}\cdot\tp{P_{\*e_k,\*x}(z)-P_{\*e_k,\*y}(z)}} \\
    &\le k\cdot \abs{P_{\*e_k,\*x}(z)-P_{\*e_k,\*y}(z)}.
\end{align*}
The second last line is obtained by inductively applying Item 2.
\end{proof}

Below we give the proof of \cref{lem:technical-main}. We use the notations $\exp(x)\coloneqq e^x$, and $\exp_2(x)\coloneqq 2^x$.

\begin{proof}[Proof of \cref{lem:technical-main}]
Let $\*x \in S$ be a string of length $L$. In light of \cref{lem:set-properties}, we only need to consider a fixed $k$-mer $w=\*e_k$, where $k\le L^{1/3}$. Define
\begin{align*}
	g_{\*x}(z) \coloneqq \sum_{j=0}^{L-k}\idc{\*x[j\colon j+k-1]=\*e_k}\cdot z^j.
\end{align*}
Recall that $g_{\*x}(p+qz)=\sum_{j=0}^{L-1}K_{\*e_k,\*x}[j]\cdot z^j=P_{\*e_k,\*x}(z)$. Denote by $a_0(\*x), \dots, a_{L-k}(\*x)$ the Chebyshev coefficients of 
\begin{align*}
	f_{\*x}(z) \coloneqq g_{\*x}(1-4a+4a\cdot z),
\end{align*}
where $a\coloneqq L^{-2/3}\log^{1/4}L$ (equivalently, the coordinates of $f_{\*x}$ in the Chebyshev basis). In other words, we can write
\begin{align*}
	f_{\*x}(z) = \sum_{j=0}^{L-k}a_j(\*x) \cdot T_j(z).
\end{align*}
We first argue that only the first few coefficients are significant. This can be done by applying \cref{thm:chebcoef-bound} to $f_{\*x}(z)$, say, with $\rho=2$. To that end, we first upper bound $\abs{f_{\*x}(z)}$ for $z \in E_{2}$. Denoting $z'=1-4a+4a\cdot z$, we have that $z' \in \tE_{a,2}$ when $z\in E_2$. By item (1) of \cref{prop:ellipse-bound}, we have $\abs{z'}\le 1+a$. It follows that
\begin{align*}
    \sup_{z \in E_2}\abs{f_{\*x}(z)} = \sup_{z' \in \tE_{a,2}}\abs{g_{\*x}(z')} \le L(1+a)^L \le L\exp(aL) = L\exp(L^{1/3}\log^{1/4}L).
\end{align*}
Therefore, we can apply \cref{thm:chebcoef-bound} to $f_{\*x}(z)$ with $\rho=2$,$M=L\exp(L^{1/3}\log^{1/4}L)$ and get (for large enough $L$)
\begin{align*}
	\forall j\ge L^{1/3}\sqrt{\log L},\quad \abs{a_{j}(\*x)} \le L\exp(L^{1/3}\log^{1/4}L)\cdot 2^{-L^{1/3}\sqrt{\log L}} \le 2^{-L^{1/3}\sqrt{\log L}/8}.
\end{align*}

To each string $\*x \in \set{0,1}^{L}$ we associate a vector
\begin{align*}
	\phi(\*x) \coloneqq \tp{a_j(\*x)\colon j=0,1,\dots,L^{1/3}\sqrt{\log L}-1}.
\end{align*}
\cref{prop:trivial-chebcoef-bound} implies each entry of $\phi(\*x)$ belongs to the interval $[-2(L-k+1),2(L-k+1)]\subseteq [-2L,2L]$. We now partition $[-2L, 2L]$ into $m$ smaller intervals $I_1, \dots, I_m$, each of length $2^{-L^{1/3}\sqrt{\log L}/8}$, meaning that $m=4L\cdot 2^{L^{1/3}\sqrt{\log L}/8}$. The vector $\phi(\*x)$ must fall into one of the sub-cubes of the form
\begin{align*}
	\+I(r) \coloneqq \prod_{0\le j< L^{1/3}\sqrt{\log L}}I_{r(j)},
\end{align*}
where $r\colon [L^{1/3}\sqrt{\log L}] \rightarrow [m]$ is a mapping that uniquely identifies the sub-cube. It follows that the total number of such sub-cubes is
\begin{align*}
	m^{L^{1/3}\sqrt{\log L}} &= \tp{4L\cdot 2^{L^{1/3}\sqrt{\log L}/8}}^{L^{1/3}\sqrt{\log L}} \\
	&= \exp_2\tp{\tp{\frac{L^{1/3}\sqrt{\log L}}{8} + \log_2{L} + 2}\cdot L^{1/3}\sqrt{\log L}} \\
	&\le \exp_2\tp{\frac{L^{2/3}\log L}{8} + O(L^{1/3}\log^{3/2}{L})} < 2^{(L^{2/3}\log L)/6} \le \abs{S}
\end{align*}
for large enough $L$. By the Pigeonhole Principle, there must be two distinct strings $\*x, \*y \in S$ such that $\phi(\*x), \phi(\*y)$ fall into the same sub-cube. In other words, we have
\begin{align*}
	\forall 0\le j< L^{1/3}\sqrt{\log L}, \quad \abs{a_j(\*x)-a_j(\*y)} \le 2^{-L^{1/3}\sqrt{\log L}/8}.
\end{align*} 
It follows that
\begin{align*}
	\sup_{z \in [1-8a,1]}\abs{g_{\*x}(z)-g_{\*y}(z)} &= \sup_{z \in [-1,1]}\abs{f_{\*x}(z)-f_{\*y}(z)} \\ 
	&\le \sup_{z \in [-1,1]}\sum_{j=0}^{L-k}\abs{a_j(\*x)-a_j(\*y)}\cdot\abs{T_j(z)} \\
	&\le \sum_{j=0}^{L^{1/3}\sqrt{\log L} - 1}2^{-L^{1/3}\sqrt{\log L}/8} + \sum_{j= L^{1/3}\sqrt{\log L}}^{L-k}2\cdot 2^{-L^{1/3}\sqrt{\log L}/8} \\
	&\le 2^{-L^{1/3}\sqrt{\log L}/7}.
\end{align*}
Applying \cref{cor:ellipse-to-interval} to $g_{\*x}-g_{\*y}$ with $a=L^{-2/3}\log^{1/4}L$ gives (for large enough $L$)
\begin{align*}
	\sup_{z \in \partial \tE_{a,2}}\abs{g_{\*x}(z)-g_{\*y}(z)} &\le \exp\tp{5aL/2} \cdot \sup_{z \in [1-8a,1]}\abs{g_{\*x}(z)-g_{\*y}(z)} \\
	&\le \exp\tp{5L^{1/3}\log^{1/4}L/2} \cdot 2^{-L^{1/3}\sqrt{\log L}/14} \le 2^{-L^{1/3}\sqrt{\log L}/15}.
\end{align*}
Let $\Gamma$ be the sub-arc of the circle $\set{p+qz\colon \abs{z}=1}$ which lies completely inside the ellipse $\tE_{a,2}$. Item (2) of \cref{prop:ellipse-bound} implies that the length of $\Gamma$ is at least $a=L^{-2/3}\log^{1/4}L$. Therefore the Maximum Modulus Principle implies
\begin{align*}
	\sup_{\theta\colon \abs{\theta}\le a}\abs{P_{\*e_k,\*x}(e^{i\theta})-P_{\*e_k,\*y}(e^{i\theta})} \le \sup_{z\in\Gamma}\abs{g_{\*x}(z)-g_{\*y}(z)} \le \sup_{z\in\tE_{a,2}}\abs{g_{\*x}(z)-g_{\*y}(z)} \le 2^{-L^{1/3}\sqrt{\log L}/15}.
\end{align*}
Now we have established the lemma for a fixed $k$-mer $w=\*e_k$. Since $\*x, \*y \in S$, \cref{lem:set-properties} says that for any other $k$-mer $w \in \set{0,1}^k$ either both $P_{w,\*x}(z)$ and $P_{w,\*y}(z)$ are zero polynomials or $w \in \set{0^k, \*e_1, \dots, \*e_k}$ and
\begin{align*}
    \sup_{\theta\colon \abs{\theta}\le a}\abs{P_{w,\*x}(e^{i\theta})-P_{w,\*y}(e^{i\theta})} \le k\cdot \sup_{\theta\colon \abs{\theta}\le a}\abs{P_{\*e_k,\*x}(e^{i\theta})-P_{\*e_k,\*y}(e^{i\theta})} \le 2^{-L^{1/3}\sqrt{\log L}/20}.
\end{align*}
Finally, we note that both $\*x$ and $\*y$ start with a run of 0's of length $k=L^{1/3}$.

\end{proof}

\begin{remark}
A much simpler proof for the slightly weaker bound $2^{-\Omega(L^{1/3})}$ is possible based on the complex analytical result of Borwein and Erdelyi \cite[Theorem 3.3]{borwein1997littlewood} (see also \cite{de2019optimal, nazarov2017trace}): there exist strings $\*x, \*y \in \set{0,1}^{L^{2/3}}$ such that
\begin{align*}
    2^{-\Omega(L^{1/3})} \ge \sup_{z \in \Gamma_{L^{-1/3}}} \abs{P_{\*x}(z)-P_{\*y}(z)} = \sup_{z \in \Gamma_{L^{-2/3}}} \abs{P_{\*x}(z^{L^{1/3}})-P_{\*y}(z^{L^{1/3}})},
\end{align*}
where $P_{\*x}(z)\coloneqq\sum_{j=0}^{\abs{\*x}-1}x_jz^j$, $\Gamma_a$ stands for the sub-arc $\set{e^{i\theta}\colon \abs{\theta} < a}$. Now we observe that $P_{\*x}(z^{L^{1/3}})=P_{\*x'}(z)$ where $\*x' \in \set{0,1}^{L}$ is the string obtained by inserting $L^{1/3}-1$ many 0's before every bit of $\*x$ ($\*y'$ is defined similarly). Clearly, $\*x', \*y' \in S$ since any two 1's are separated by at least $L^{1/3}-1$ many 0's. Therefore, they enjoy the properties in \cref{lem:set-properties}, and \cref{lem:complex-main} follows with a weaker bound.\footnote{We thank an anonymous reviewer for pointing this observation out to us.}
\end{remark}



\section{Optimality of the Maximum Likelihood Estimation}\label{sec:mle}

\begin{proof}[Proof of \thmref{thm:mle-optimality}]
	For $1\le i\le m$ define $S_i\coloneqq \set{x \in \Omega \colon D_0(x) > D_i(x)}$, and let $S\coloneqq S_1\cap S_2 \cap \dots \cap S_m$. By definition of the total variation distance, we have 
	\begin{align*}
		1-\eps \le \tv{D_0, D_i} = D_0(S_i) - D_i(S_i) \le D_0(S_i).
	\end{align*}
	The Union Bound thus implies $D_0(S)\ge 1-m\eps$. Moreover, by \cref{def:mle}, when $x \in S$ it must hold that $\mle(x;\+D)=0$. Therefore
	\begin{align*}
		\PRs{x\sim\+D_0}{\mle(x;\+D)=0} \ge \PRs{x\sim\+D_0}{x \in S} = D_0(S) \ge 1-m\eps.
	\end{align*}
\end{proof}

\begin{proof}[Proof of Corollary \ref{cor:mle-trace}]
The Chernoff bound implies that if we repeat the purported reconstruction algorithm $8\ln(1/\eps)n$ times and output the majority, it succeeds with probability at least $1-\eps/2^{n+1}$.

Let $A$ be such a (deterministic) reconstruction algorithm with $T'=8\ln(1/\eps)\cdot nT$ traces described as above, which successfully outputs the source string $x$ with probability at least $1-\eps/2^{n+1}$. Formally, for any source string $x\in\set{0,1}^n$, it holds that
\begin{align*}
    \PRs{\tx_1, \dots, \tx_{T'} \sim D_x}{A(\tx_1, \dots, \tx_{T'}) = x} \ge 1-\eps/2^{n+1}.
\end{align*}
Let $R_x \subseteq \tp{\set{0,1}^{\le n}}^{T'}$ be exactly the collection of $T'$-tuples of strings on which $A$ outputs $x$. We thus have
\begin{align*}
    \forall x\in\set{0,1}^n,\quad D_{x}^{\otimes T'}(R_x) \ge 1-\eps/2^{n+1},
\end{align*}
where $D_x^{\otimes T'}$ denotes the $T'$-fold product of $D_x$ with itself, capturing the distribution of $\tp{\tx_1,\dots,\tx_{T'}}$. On the other hand, for distinct strings $x$ and $y$ we have $R_x \cap R_y = \varnothing$ (by definition, $A$ cannot output both $x$ and $y$ on the same input), and hence the bound
\begin{align*}
    D_y^{\otimes T'}(R_x) \le 1 - D_y^{\otimes T'}(R_y) \le \eps.
\end{align*}
This implies
\begin{align*}
    \tv{D_x^{\otimes T'}, D_y^{\otimes T'}} = \sup_{S}\abs{D_{x}^{\otimes T'}(S) - D_y^{\otimes T'}(S)} \ge D_{x}^{\otimes T'}(R_x) - D_y^{\otimes T'}(R_x) \ge 1-2\eps/2^{n+1}=1-\eps/2^n.
\end{align*}
We stress that the above bound holds for any pair of distinct strings $x, y \in \set{0,1}^n$. Applying \cref{thm:mle-optimality} to $\+D\coloneqq\set{D_x^{\otimes T'}\colon x\in\set{0,1}^n}$ gives
\begin{align*}
    \forall x \in \set{0,1}^n,\quad \PRs{\tx_1,\dots,\tx_{T'}\sim D_x}{\mle(\tx_1,\dots,\tx_{T'}; \+D)=x} \ge 1 - (2^n-1)\cdot \eps/2^n \ge 1 - \eps.
\end{align*}
\end{proof}

\begin{proof}[Proof of \thmref{thm:mle-lb}]
 The distributions are defined as follows. Let $t=\floor{n/4}$, and so $m=\binom{n}{t}$. The domain $\Omega = \Omega_1\cup \Omega_2$ where $\Omega_1=\binom{[n]}{t}$ is the collection of all subsets of $[n]$ of size exactly $t$, and $\Omega_2=[n]$. We have 
 \begin{align*}
    \abs{\Omega} = \binom{n}{t} + n = m + n.
 \end{align*}
 
 We first define $D_0$ to be the uniform distribution over $\Omega_2$, \ie, $D_0(\mathfrak{s})=1/n$ for any $\mathfrak{s} \in [n]$.
 
 For each one of the remaining $m$ distributions, we identify it with a $t$-subset $S \in \binom{[n]}{t}$. The precise definition of $D_S$ is as follows.
 \begin{align*}
    \forall S \in \binom{[n]}{t}, \quad D_S(\mathfrak{s}) = \begin{cases}
                                                             2/3 & \textup{if $\mathfrak{s} \in \Omega_1$, and $\mathfrak{s} = S$} \\
                                                             1/(3t) & \textup{if $\mathfrak{s} \in \Omega_2$, and $\mathfrak{s} \in S$} \\
                                                             0 & \textup{otherwise}
                                                            \end{cases}.
 \end{align*}
 In other words, $\mathfrak{s} \in \Omega_1$ occurs with probability $2/3$, conditioned on which $D_S$ is the point distribution supported on $\set{S}$; $\mathfrak{s} \in \Omega_2$ occurs with probability $1/3$, conditioned on which $D_S$ is the uniform distribution over $S$. Now we verify that $\+D=\set{D_0,D_1,\dots, D_m}$ satisfies the two conditions.
 
 For Condition 1, consider a distinguisher $A$ which on sample $\mathfrak{s} \in \Omega$, outputs $S$ if $\mathfrak{s} = S \in \Omega_1$, and outputs 0 if $\mathfrak{s} \in \Omega_2$. We have
 \begin{align*}
    \PRs{\mathfrak{s}\sim D_0}{A(\mathfrak{s})=0}=1\ge 2/3, \quad \PRs{\mathfrak{s}\sim D_S}{A(\mathfrak{s})=S}=2/3.
 \end{align*}
 
 To see Condition 2, let $\mathfrak{s}_1, \dots, \mathfrak{s}_T \sim D_0$ be $T\le \floor{n/4}$ samples. Since $D_0$ is supported on $\Omega_2=[n]$, the samples are all elements of $[n]$, meaning that there is at least one $S \in \binom{[n]}{t}$ containing all samples. Calculating the likelihoods gives
 \begin{align*}
    \prod_{i=1}^{T}D_S(\mathfrak{s}_i) = \tp{\frac{1}{3t}}^T \ge \tp{\frac{4}{3n}}^T > \tp{\frac{1}{n}}^T = \prod_{i=1}^{T}D_0(\mathfrak{s}_i).
 \end{align*}
 Therefore, the output of the Maximum Likelihood Estimation on $\mathfrak{s}_1, \dots, \mathfrak{s}_T \sim D_0$ will never be $0$.

\end{proof}

\begin{section}{Acknowledgements}
    We are thankful to several anonymous reviewers for their valuable suggestions and comments.
\end{section}

\bibliographystyle{alpha}
\bibliography{reference}

\end{document}